\DeclareMathOperator{\Tr}{Tr}
\numberwithin{equation}{section}
\newcommand{\eqa}{\begin{eqnarray}}
\newcommand{\eeqa}{\end{eqnarray}}
\newcommand{\beq}{\begin{equation}}
\newcommand{\eeq}{\end{equation}}
\newcommand{\nn}{\nonumber}
\newtheorem{theorem}{Theorem}
\newtheorem{definition}[theorem]{Definition}
\newtheorem{proposition}[theorem]{Proposition}
\newenvironment{proof}[1][Proof]{\textbf{#1.} }{\ \rule{0.5em}{0.5em}}
\newcommand{\benumerate}{\begin{enumerate}}
\newcommand{\eenumerate}{\end{enumerate}}
\newcommand{\bitemize}{\begin{itemize}}
\newcommand{\eitemize}{\end{itemize}}
\newcommand{\av}[2]{\langle #1, #2 \rangle }
\begin{document}

\title{Symmetric Matrix Ensemble and Integrable Hydrodynamic Chains}

\author{Costanza Benassi, Marta Dell'Atti, Antonio Moro\footnote{  costanza.benassi@northumbria.ac.uk, m.dellatti@northumbria.ac.uk, antonio.moro@northumbria.ac.uk}
 \\
\\
\it Department of Mathematics, Physics and Electrical Engineering \\ {\it Northumbria University Newcastle, UK} \\
}
\date{}

    \maketitle

\begin{abstract}
The partition function of the Symmetric Matrix Ensemble is identified with the $\tau-$function of a particular solution of the Pfaff Lattice. We show that, in the case of even power interactions, in the thermodynamic limit, the $\tau-$function corresponds to the solution of an integrable chain of hydrodynamic type. We prove that the hydrodynamic chain so obtained is diagonalisable and admits hydrodynamic reductions in Riemann invariants in an arbitrary number of components. \\

Keywords: Random Matrices, Hydrodynamic Integrable Systems, Hydrodynamic Reductions, Gibbons-Tsarev Systems
\end{abstract}


\section{Introduction}
\label{sec:intro}
Random Matrix Ensembles appear in relation to a variety of problems in mathematics and physics, often showing intriguing and unexpected connections. Symmetric, Hermitian and Symplectic ensembles  have been originally introduced to describe the statistics of energy levels of heavy nuclei and complex systems  \cite{Dysoni,wigner_1951, porter_rosenzwieg}; the Circular Unitary Ensemble and the zeros of the Riemann Zeta function, respectively, in the thermodynamic limit and in the far limit on the critical line appear to follow the same statistics \cite{Keating00randommatrix}; the Hermitian Matrix Ensemble (HME) arises from discrete approximations of Topological Field Theory~\cite{Witten, Kontsevich:1992ti, Bessis:1980ss, Brezin:1977sv}; the partition function of the orthogonal ensemble appears in the calculation of the generating function for specific subsequences of permutations \cite{ADLER2004190}.
In addition to the above mentioned connections, a remarkable result is the identification of the partition function of Random Matrix Ensembles with particular solutions of nonlinear integrable systems \cite{Witten, Kontsevich:1992ti,vanMoerbeke2000,vanMoerbekenotes}. For example, the partition function for the HME can be identified simultaneously with the $\tau-$function of a particular solution of the Toda Lattice hierarchy and of the Kadomtev-Petviashvili hierarchy~\cite{vanMoerbeke1995}. 

In the present paper, we propose an approach to the study of the Symmetric Matrix Ensemble (SME) based on the method of differential identities developed and effectively applied to a variety of statistical mechanical models (see e.g. \cite{HermitianPRE,MoroPotts,Moro2014,MoroAnnals}). In the case of SMEs, the method relies on its underlying integrable structure realised by the Pfaff Lattice \cite{JimboMiwa,vanMoerbeke1999, TakasakiPfaff,TakasakiFay,Zabrodin}. In the thermodynamic limit, such integrable structure allows the derivation of a system of partial differential equations (PDEs)of dispersionless type for the order parameters. We note that a first direct connection between the dispersionless limit of the Pfaff Hierarchy and  the thermodynamic limit of matrix models has been studied in \cite{KodamaPierce}. In this paper, we prove that for the SME with even interactions, the order parameters satisfy an integrable hydrodynamic chain of PDEs. Integrable hydrodynamic chains, of which the moments Benney chain is the protypical example \cite{Benney}, represent an important class of integrable systems of dispersionless type that has attracted a great deal of interest over the last two decades in relation to their classification, construction of new integrable systems via reductions and associated Hamiltonian structures \cite{chain,PavlovChains,PavlovChainsClass,PavlovChainsKup,ChainsOdesskiSokolov}. We find, at the best of our knowledge, a new example of integrable hydrodynamic chain and prove its integrability via the method of hydrodynamic reductions \cite{FKred,chain}.

The SME is characterised by the partition function of the form
\begin{equation}
Z_n(\mathbf{t})=\int_{\mathcal{S}_n} \text{e}^{-H(M)}\, dM,
\label{eq:partition_n_harr_intro}
\end{equation}
where $dM$ is the Haar measure, i.e. $dM \coloneqq \prod_{1\le i\le j \le n} dM_{ij}$, and $\mathcal{S}_n$ is the set of real symmetric matrices. The function $H(M)$, chosen as
\[
H(M) = - \Tr \left( -\,\frac{M^2}{2}\,+\,\sum_{k\geq 1} t_k M^k \right),
\]
is referred to as the Hamiltonian and parameters $\mathbf{t} = \{t_{k}\}_{k\geq 1}$ are the coupling constants. This terminology refers to Matrix Models of interest in Quantum Field Theory where $H(M)$ is interpreted as the Hamiltonian of the system and $t_{k}$ are the coupling constants associated to different degrees of interaction \cite{Witten}. Hence, the free particle theory  corresponds to the case where all coupling constants $t_{k}$ vanish. Notice that for $\mathbf{t} = \mathbf{0}$, the expression \eqref{eq:partition_n_harr_intro} reduces the partition function of the Gaussian Orthogonal Ensemble (GOE). 

Based on a classical result by Weyl \cite{Weyl}, observing that $H(M)$ depends on the eigenvalues~$\{ z_{k}\}_{k=1}^{n}$ of $M$ only,  the integral \eqref{eq:partition_n_harr_intro} can be reduced to an integral over the eigenvalues of the form
\begin{equation}
Z_n(\mathbf{t})= C_n \int_{\mathbb{R}^n} |\Delta_n(z)| \, \prod_{i=1}^n \text{e}^{- \frac{z^2_i}{2} + \sum_{k\geq 1}t_k\,z_i^k}\,dz_i \,
\label{eq:partition_n_eigenvalues_intro}
\end{equation}
where $C_n$ is a constant obtained from the integration over the remaining degrees of freedom, and $\Delta_n(z)$ denotes the Vandermonde determinant
$ \Delta_n(z)=\prod_{1\leq i < j \leq n}(z_i-z_j)$. 
A fundamental result by Adler and van Moerbeke \cite{vanMoerbeke1999} establishes that for $2n \times 2n$ symmetric matrices, the function
\begin{equation}
\tau_{2 n}(\mathbf{t}) := \frac{1}{(2n)! C_{2n}}Z_{2n}(\mathbf{t}) = \frac{1}{(2n)!}\int_{\mathbb{R}^{2n}} |\Delta_{2n}(z)| \, \prod_{i=1}^{2n} \text{e}^{- \frac{z^2_i}{2} + \sum_{k\geq 1}t_k\,z_i^k}\,dz_i \, 
\label{eq:partition_and_tau_intro}
\end{equation}
is the Pfaffian of the moments matrix
\[
m_{2n}(\mathbf{t})=\left (\av{x^{i}}{y^{j}}_\mathbf{t}\right )_{0 \le i,j < 2n-1}
\]
where $\av{\,\cdot\,}{\cdot\,}_{\mathbf{t}}$  is a skew-symmetric scalar product induced by the measure on the SME. More specifically
$\tau_{2n}(\mathbf{t}) = \textup{pf} \left( m_{2n} \right)$
is a particular solution of the Pfaff Lattice, an integrable system arising in relation to the algebra splitting of $\textup{gl}(\infty)$ into $\textup{sp}(\infty)$ and the algebra of $2\times 2$ blocks lower triangular matrices \cite{vanMoerbeke1999}. The Pfaff Lattice equations are constructed based on the following unique factorisation of the semi-infinite moments matrix
\begin{equation}
m_{\infty}(\mathbf{t})= \left( Q(\mathbf{t})^{-1} \right) \,J\, \left( Q(\mathbf{t})^{-1}\right)^{\top}  
\label{skew_decomposition_intro}
\end{equation} 
where $J$ is the semi-infinite skew-symmetric matrix such that $J^2=-I$ and $Q$ is a semi-infinite lower triangular matrix. The Lax matrix of the form
\begin{equation}
L(\mathbf{t})= Q(\mathbf{t}) \Lambda Q(\mathbf{t})^{-1}, \,
\label{eq:l_t_intro}
\end{equation} 
where $\Lambda = \{\delta_{i,j-1} \}_{i,j=1}^{\infty}$ is the shift matrix with $\delta_{i,j}$ the Kronecker delta, satisfies the Lax equation~\cite{vanMoerbeke1999}
\begin{equation}
\frac{\partial L}{\partial t_{k}} = \left[ \,- (L^{k})_{\mathfrak{t}}\,, L \, \right]  \,.
\label{eq:hamiltonian_L_intro}
\end{equation}
The projection $\left (A \right)_{\mathfrak{t}}$ for a given matrix $A$ is defined as follows
\begin{equation}
A_{\mathfrak{t}}:= A_{-} - J\,(A_{+})^{\top}\,J + \frac{1}{2} \left( A_{0} - J (A_0)^{\top} J, \right)\label{eq:projection}
\end{equation}
with $A_{\pm}$ denoting, respectively, the upper and lower triangular part of $A$, with all $2\times 2$ diagonal blocks equal to zero, and $A_{0}$ the block diagonal part of $A$ with $2\times 2$ diagonal blocks. The entries of the Lax matrix $L$ depend on the sequence of $\tau-$funtions $\{\tau_{2n}(\mathbf{t})\}_{n\geq1}$ and their derivatives with respect to the coupling constants $t_{k}$. The Lax matrix associated to the SME partition function is a solution of the Lax equation with initial condition specified by $\tau_{2n}(\mathbf{t})$  and its derivatives evaluated at $\mathbf{t} = \mathbf{0}$. It is important to note that in this case the integrals are specified by the Gaussian measure and can be evaluated using Selberg's theorem (see e.g. \cite{Meh2004}). The study of the form of the Lax equation \eqref{eq:hamiltonian_L_intro} and its asymptotic properties in the large $n$ limit provides important information on the generic properties of solutions, as for example their singularities and breaking mechanisms, independently of the particular initial condition. More specifically,
we observe that the components of the Lax equation \eqref{eq:hamiltonian_L_intro} can be organised as two coupled systems of ODEs, a {\it double chain} in infinite components, of the form
\begin{gather}
\begin{aligned}
\partial_{t_{k}}\mathbf{v}_{n} &= F_{k}[\mathbf{v},\mathbf{w}]  \\
\partial_{t_{k}}\mathbf{w}_{n} &= G_{k}[\mathbf{v},\mathbf{w}]
\end{aligned}
\label{explicit_full_lattice}
\end{gather}
where $\mathbf{v}_{n}$ and $\mathbf{w}_{n}$ are the entries of the Lax matrix $L$ suitably recast in the form of infinite vectors, e.g. $\mathbf{v}_{n} = (\dots,v^{-k}_{n},\dots,v^{-1}_{n},v^{0}_{n},v^{1}_{n},\dots,v^{k}_{n},\dots)^{\top}$, each associated to a position $n$ on the lattice. $F_{k}$ and $G_{k}$ are nonlocal functions on the lattice, evaluated on specific subsets of sites that depend on the chosen $t_{k}-$flow.

We then proceed with the study of the Lax equations for SMEs with even power interactions such that the partition function is of the form
\begin{equation}
Z_{2n}(\mathbf{t})=\int_{\mathcal{S}_{2n}} \text{e}^{\Tr \left( -\,\frac{M^2}{2}\,+\,\sum_{k \ge 1} t_{2k} M^{2k} \right)}\, dM. 
\label{eq:partition_n_even_intro}
\end{equation}
The above choice automatically selects a reduction of the even Pfaff Lattice given by the hierarchy~\eqref{eq:hamiltonian_L_intro} restricted to the even times $t_{2k}$. 
Hence, the system \eqref{explicit_full_lattice} is replaced by a single chain of the form
\begin{gather}
\begin{aligned}
\partial_{t_{2 k}}\mathbf{w}_{n} &= H_{k}[\mathbf{w}]
\end{aligned}
\label{explicit_reduced_lattice}
\end{gather}
where, similarly to the more general case \eqref{explicit_full_lattice}, $H_{k}[\mathbf{w}]$ is a nonlocal function on the lattice. Introducing the variable $x = \varepsilon n$ and the interpolation function
\begin{equation*}
\mathbf{u}(x) := \mathbf{w}\left (\frac{x}{\varepsilon} \right)
\end{equation*}
such that $\mathbf{u}(x\pm j \varepsilon) = \mathbf{w}_{n \pm j}$ for some integer $j$, the thermodynamic limit of the matrix ensemble, i.e. the limit for $n \to \infty$, corresponds to the continuum limit of the reduced even Pfaff Hierarchy~\eqref{explicit_reduced_lattice} obtained by taking $\varepsilon \to 0$ such that $x = \varepsilon n$ remains finite and $\mathbf{u}(x)$ is an infinite component vector field of the continuous variable $x$. Substituting the interpolating function in \eqref{explicit_reduced_lattice} and expanding in Taylor series for $\varepsilon \to 0$, at the leading order, one obtains a hierarchy of compatible partial differential equations. A direct calculation performed for the first flows associated to $t_{2}$, $t_{4}$ and $t_{6}$ shows that the equations so obtained constitute an infinite system of first order PDEs of hydrodynamic type, referred to as hydrodynamic chains, of the form
\begin{equation}
\label{hydromatrixchain_intro}
{\bf u}_{T_{2 k}} = A^{(2 k)}({\bf u}) \, {\bf u}_{x} 
\end{equation}
where $T_{2k}$ corresponds to the time variable $t_{2k}$ suitably rescaled, e.g. $T_2 = \varepsilon t_2$. $A^{(2k)}(\mathbf{u})$ is a sparse matrix where each row contains a finite number of elements depending on a finite number of components of the vector field ${\mathbf{u}(x)}$. Infinite matrices of this type are referred to as  class $C$ (chain-class) matrices~\cite{chain}. We conjecture that the form \eqref{hydromatrixchain_intro} holds for all $k=1,2,\dots$\, . We do not have at the moment a proof of this conjecture.

We prove that the hydrodynamic chain so obtained passes the diagonalisability test (see Proposition~\ref{th:Htest}) introduced in \cite{chain} and is integrable in the sense of hydrodynamic reductions (Theorem \ref{th:hrydrored}) \cite{FKred,chain}. As the compatibility of the hierarchy \eqref{hydromatrixchain_intro} implies that the matrices $A^{(2k)}$ commute, it is sufficient to perform the diagonalisability test for the first flow of the hierarchy only.
The test establishes that a hydrodynamic chain of class $C$ defined by the infinite matrix $A(\mathbf{u}) = \{a^{i}_{j} \}_{i,j=-\infty}^{\infty}$ is diagonalisable if and only if all components of the Haantjes tensor 
\begin{equation}
H^{i}_{jk} = N^{i}_{pr} \,  a^{p}_{j} \,  a^{r}_{k} - N^{p}_{jr} \,  a^{i}_{p} \,  a^{r}_{k} - N^{p}_{rk} \,  a^{i}_{p} \,  a^{r}_{j} + N^{p}_{jk} \, a^{i}_{r} \, a^{r}_{p},
\label{eq:haantjes_intro}
\end{equation}
where $N^i_{jk}$ is the Nijenhuis tensor
\begin{equation}
N^{i}_{jk} = a^{p}_{j} \,  \partial_{u^{p}} a^{i}_{k} -a^{p}_{k} \, \partial_{u^{p}} a^{i}_{j} - a^{i}_{p} \left(\partial_{u^{j}} a^{p}_{k} -\partial_{u^{k}} a^{p}_{j} \right)\label{eq:Ndef_intro}
\,,
\end{equation}
vanish identically.
The notion of integrability in the sense of hydrodynamic reductions for a hydrodynamic chain  extends the similar concept introduced in the context of finite component systems \cite{FKred} and characterises the chain under consideration as integrable if it admits $N-$phase solutions of the form $\mathbf{u}(R^{1},\dots,R^{N})$ for any integer $N$, where $R^{i} = R^{i}(x,t)$ ({\it Riemann invariants}) satisfy the semi-Hamiltonian diagonal system of hydrodynamic type
\begin{equation}
    R^{i}_{t} = \lambda^{i}\left(R^{1},\dots,R^{N}\right) R^{i}_{x} \qquad i=1,\dots,N.
\label{Rinv_intro}
\end{equation}
The time $t$ can be identified, subject to a suitable re-scaling, with any of the ``times" $t_{2k}$ and $\lambda^{i}\left(R^{1},\dots,R^{N}\right)$ are the characteristic speeds of the corresponding flow. The system for Riemann invariants \eqref{Rinv_intro} is required to fulfill the semi-Hamiltonian property which can be expressed in terms of the following differential constraint on the characteristic speeds
\begin{equation}
\label{semiH_intro}
\partial_{k} \left (\frac{\partial_{j} \lambda^{i}}{\lambda^{j} - \lambda^{i}} \right) = \partial_{j} \left (\frac{\partial_{k} \lambda^{i}}{\lambda^{k} - \lambda^{i}} \right), \qquad i \neq j \neq k,
\end{equation}
with the notation $\partial_{i} = \partial / \partial R^{i}$.
The condition \eqref{semiH_intro} guarantees that  equation \eqref{Rinv_intro}  constitutes a system of conservation laws \cite{Sevennec}. A classical result by Tsarev establishes that the system~\eqref{Rinv_intro} is completely integrable by the generalised hodograph method (\cite{Tsarev}, see also \cite{1989RuMaS..44...35D}) and the solution is given by the following equation
\[
x + \lambda^{i}\left(R^{1},\dots,R^{N}\right) t = \mu^{i}\left(R^{1},\dots,R^{N}\right) \qquad i =1,\dots,N
\]
where the functions $\mu^{i}\left(R^{1},\dots,R^{N}\right)$  satisfy the system of linear PDEs of the form
\begin{equation}
\frac{\partial_{j} \lambda^{i}}{\lambda^{i} - \lambda^{j}} = \frac{\partial_{j} \mu^{i}}{\mu^{i} - \mu^{j}}.
\label{Rsymm}
\end{equation}
The solution to the system \eqref{Rsymm} is parametrised via $N$ functions of one variable that can be fixed by the initial conditions on the functions $R^{i}$. 

The paper is organised as follows. In Section \ref{sec:SMEreview}, we review some results regarding SMEs and their relationship with the Pfaff Lattice (see e.g. \cite{vanMoerbeke1999, vanMoerbekenotes}). In Section \ref{sec:pfaffgen}, we study the structure of the Lax matrix \eqref{eq:l_t_intro} and write the explicit evolution equations for the first flow of the Pfaff Lattice. Section \ref{sec:paffeven} is devoted to the SME with even degree interactions and its relation with the even Pfaff Hierarchy. In Section \ref{sec:limit} we study the thermodynamic limit and show that the resulting hydrodynamic chain is diagonalisable and integrable. We then conclude with some final remarks in Section \ref{sec:conclusions}. 
Appendices provide some addition technical details as well as elements that are subject of further studies. The expressions for the second flow of the Pfaff Lattice with odd and even times are provided in Appendix \ref{app:t2flow_all}; the explicit form of the coupled system \eqref{explicit_full_lattice} and its higher order corrections are given in Appendix~\ref{app:dispfaff}; higher order corrections to the hydrodynamic chain \eqref{explicit_reduced_lattice} are reported in Appendix \ref{app:even_corrections}; Appendix \ref{sec:N} lists non-zero entries of the Nijenhuis tensor \eqref{eq:Ndef_intro} for the hydrodynamic chain \eqref{explicit_reduced_lattice}.

\section*{Acknowledgements}
This work is supported by the Leverhulme Trust RPG 2017-228 (PI A.M.). Authors also thank the London Mathematical Society, the Royal Society International Exchanges Grant IES-R2-170116 (PI A.M.), GNFM - Gruppo Nazionale per la Fisica Matematica, INdAM (Istituto Nazionale di Alta Matematica) for supporting activities that contributed to the research reported in this paper.

\section{Symmetric Matrix Ensemble and Pfaff Lattice}
\label{sec:SMEreview}

In this section we briefly review definitions and properties of the SME and its connection with the Pfaff Lattice  with a focus on aspects that are relevant for the purposes of this paper \cite{vanMoerbeke1999, vanMoerbeke2002, vanMoerbekenotes}.
As mentioned above, the partition function \eqref{eq:partition_n_harr_intro}  can be reduced, up to a proportionality constant, to the integral over the eigenvalues of the form 
\[
\tau_{2 n}(\mathbf{t}) := \frac{1}{(2n)!}\int_{\mathbb{R}^{2n}} |\Delta_{2n}(z)| \, \prod_{i=1}^{2n} \text{e}^{- \frac{z^2_i}{2} + \sum_{k\geq 1}t_k\,z_i^k}\,dz_i \,,
\]
where $\tau_{2n}$ is referred to as Pfaffian $\tau$-function. The function $\tau_{2n}$ is in fact the Pfaffian of the $\mathbf{t}-$dependent moment matrix $m_{2n}(\mathbf{t})=\big(\mu_{ij}(\mathbf{t})\big)_{0 \le i,j < 2n-1}$, i.e.
\begin{equation}
\tau_{2n}(\mathbf{t}) = \text{pf}(m_{2n}(\mathbf{t}))=\left( \text{det} \, m_{2n}(\mathbf{t}) \right)^{1/2} \,
\label{eq:tau_pfaffian}
\end{equation}  
where entries of $m_{2n}(\mathbf{t})$ are constructed via the skew-symmetric inner product $\langle \cdot \,,\,\cdot \rangle_{\mathbf{t}} $ 
\begin{equation}
\mu_{ij}(\mathbf{t}) = \langle x^i\,,\,y^j \rangle_{\mathbf{t}} := \int \int_{\mathbb{R}^2} x^i \, y^j \,\sigma(x-y)\,\text{e}^{-\frac{1}{2} (x^2 + y^2) + \sum_{k}t_k\,(x^k+y^k)}\,dx\,dy
\label{eq:moments}
\end{equation}
with $\sigma(x)=\text{sign}(x)$. Noting that $\mu_{ij}=-\mu_{ji}$, the moments matrix $m_{2n}$ is skew-symmetric and takes the form
\begin{equation}
m_{2n}=\big( \mu_{i\,j} \big)_{0\le i,j\le 2n-1}= \begin{pmatrix}
0 & \mu_{0\,1} & \mu_{0\,2} & \mu_{0\,3} & \mu_{0\,4} & \mu_{0\,5} & \dots \\[1ex] 
-\mu_{0\,1} & 0 & \mu_{1\,2} & \mu_{1\,3} & \mu_{1\,4} & \mu_{1\,5} & \dots \\[1ex] 
-\mu_{0\,2} & -\mu_{1\,2} & 0 & \mu_{2\,3} & \mu_{2\,4} & \mu_{2\,5} & \dots\\[1ex] 
-\mu_{0\,3} & -\mu_{1\,3} & -\mu_{2\,3} & 0 & \mu_{3\,4} & \mu_{1\,5} & \dots\\[1ex] 
-\mu_{0\,4} & -\mu_{1\,4} & -\mu_{2\,4} & -\mu_{3\,4} & 0 & \mu_{4\,5} & \dots\\[1ex] 
-\mu_{0\,5} & -\mu_{1\,5} & -\mu_{2\,5} & -\mu_{3\,5} & -\mu_{4\,5} & 0 & \dots \\[1ex]
\vdots & \vdots & \vdots & \vdots & \vdots & \vdots & \ddots \\ 
\end{pmatrix}
\end{equation} 
The evolution equations of matrix elements $\mu_{ij}$ with respect to the coupling constants $\{t_k\}_{k\in \mathbb{N}}$ follow from the direct differentiation from the definition \eqref{eq:moments} and read as
\begin{equation}
\frac{\partial \mu_{ij}}{\partial t_k}=\mu_{i+k,\,j}+\mu_{i,\,j+k} .
\label{lattice_moments}
\end{equation}
Therefore, the semi-infinite moment matrix $m_{\infty}$  satisfies the equation
\begin{equation}
\frac{\partial m_{\infty}}{\partial t_k} = \Lambda^k\,m_{\infty}+m_{\infty} \, \Lambda^k \,, 
\end{equation}
where $\Lambda$ is the shift matrix 
\begin{equation}
\Lambda=\begin{pmatrix}
\hspace{1ex} 0 & 1 & 0 & 0 & \dots \hspace{1ex} \\[1ex]
0 & 0 & 1 & 0 & \dots \\[1ex]
0 & 0 & 0 & 1 & \dots \\[1ex]
\vdots & \vdots & \vdots & \vdots & \ddots \\[1ex] 
\end{pmatrix}.\\[1ex] 
\end{equation}
As mentioned above regarding the equation \eqref{skew_decomposition_intro}, the matrix $m_{\infty}$ admits the unique factorisation~\cite{vanMoerbeke1995, vanMoerbeke2002, vanMoerbekenotes}
\begin{equation}
\label{minfactor}
m_{\infty}(\mathbf{t})= \left( Q(\mathbf{t})^{-1} \right)\,J\,\left( Q(\mathbf{t})^{-1}\right)^{\top}  
\end{equation}
where $J$ is the semi-infinite skew-symmetric matrix such that $J^2=-I$ and $Q$ is a semi-infinite lower triangular matrix of the form 
\begin{equation}
Q(\mathbf{t})= \begin{tikzpicture}[>=stealth,thick,baseline]
    \matrix [matrix of math nodes,row sep=.4em, column sep=1.4em,left delimiter=(,right delimiter=)](A){ 
 \ddots & 0  & 0 & 0 & 0 & \dots \\[1ex] 
 & Q_{2n,2n} & 0 & 0 & 0 & \dots \\[1ex] 
 & 0 & Q_{2n,2n} & 0 & 0 & \dots \\[1ex] 
 & \ast & \ast & Q_{2n+2,2n+2} & 0 & \dots \\[1ex] 
 & \ast & \ast  & 0 & Q_{2n+2,2n+2} & \dots \\[1ex] 
 & \vdots & \vdots & \vdots & \vdots & \ddots  \\[1ex] 
   };
\draw[opacity=0.7]  ($(A-2-2)+(-20pt,10pt)$) rectangle ($(A-3-3)+(20pt,-10pt)$);
\draw[opacity=0.7]  ($(A-4-4)+(-30pt,10pt)$) rectangle ($(A-5-5)+(30pt,-10pt)$);
\end{tikzpicture}.    \\[1ex] 
\label{eq:q_t}
\end{equation}
Due to the factorisation \eqref{minfactor}, elements of the matrix $Q$ depend on the moments $\mu_{ij}$ as well as the Pfaffian $\tau-$function $\tau_{2n}$. From equation \eqref{lattice_moments}, it follows that the evolution of the Pfaffian $\tau-$function with respect to the $k-$th time can be written as
\begin{equation}
\frac{\partial \tau_{2n}}{\partial t_k} = \sum_{i,j=0}^{2n-1} \frac{\partial \tau_{2n}}{\partial \mu_{i,j}} \, \frac{\partial \mu_{i,j}}{\partial t_k} =    \sum_{i,j=0}^{2n-1} \frac{\partial \tau_{2n}}{\partial \mu_{i,j}} \, \left(  \mu_{i+k,j} + \mu_{i,j+k} \right).
\end{equation}
Therefore, elements of the decomposition matrix $Q$ are expressed in terms of $\tau_{2n}$ and suitable combinations of its derivatives with respect to the times $\{t_k\}_{k\in\mathbb{N}}$ determined by the Schur's polynomials of the differential operators $\{\partial_{t_k}\}_{k\in\mathbb{N}}$ \cite{vanMoerbekenotes}.

The factorisation of the moments matrix allows to define the Lax matrix 
\begin{equation}
L(\mathbf{t})= Q(\mathbf{t}) \Lambda Q(\mathbf{t})^{-1}
\label{Lmatdef}
\end{equation}
for which the following theorem holds:

\newpage
\begin{theorem}[\cite{vanMoerbeke1999}] The function $\tau_{2n}$ is a $\tau$-function for the Pfaff Lattice, i.e. the following operator 
\begin{equation}
L(\mathbf{t})= Q(\mathbf{t}) \Lambda Q(\mathbf{t})^{-1}  = \begin{pmatrix}
0  & 1 & 0 & 0 & 0 & 0 & \dots \\[1.5ex]
\hspace{1ex} \ast \hspace{1ex} & \partial_{t_1} \log \tau_2  &   \left( \dfrac{\tau_4 \, \tau_0}{\tau_2^2} \right)^{1/2} & 0 & 0 & 0 & \ddots  \\[4ex]
\hspace{1ex} \ast \hspace{1ex} & \hspace{1ex} \ast \hspace{1ex}   & \hspace{-2ex} -\partial_{t_1} \log \tau_2 & 1 & 0 & 0 & \ddots \\[1.5ex]
\hspace{1ex} \ast \hspace{1ex}  & \hspace{1ex} \ast \hspace{1ex}  & \hspace{1ex} \ast \hspace{1ex}  & \hspace{-2ex} \partial_{t_1} \log \tau_4  &    \left( \dfrac{\tau_6 \, \tau_2}{\tau_4^2} \right)^{1/2} & 0 & \ddots  \\[3ex]
\hspace{1ex} \ast \hspace{1ex}  & \hspace{1ex} \ast \hspace{1ex}  & \hspace{1ex} \ast \hspace{1ex}  &  \hspace{1ex} \ast \hspace{1ex}    &    -\partial_{t_1} \log \tau_4 & 1 & \ddots  \\[3ex]
\vdots & \ddots  & \ddots & \ddots & \ddots &  \ddots  & \ddots \\[1.5ex]
\end{pmatrix} 
\label{eq:l_t}
\end{equation} 
satisfies the commuting equations
\begin{equation}
\frac{\partial L}{\partial t_k} = \left[ \,- (L^k)_{\mathfrak{t}}\,, L \, \right]  \, \qquad k\in \mathbb{N},
\label{eq:hamiltonian_L}
\end{equation}
The equation \eqref{eq:hamiltonian_L} is referred to as the Lax equation of the Pfaff Lattice.
\end{theorem}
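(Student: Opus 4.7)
The theorem makes two claims: the closed-form entries of $L$ in terms of Pfaffian $\tau$-functions displayed in (\ref{eq:l_t}), and the Lax equation (\ref{eq:hamiltonian_L}). I would treat them separately. The Lax equation is purely structural and follows from the uniqueness of the factorization (\ref{minfactor}) together with the algebra splitting $\mathrm{gl}(\infty) = \mathrm{sp}(\infty) \oplus \mathfrak{k}$ realised by the projection $(\cdot)_{\mathfrak{t}}$ of (\ref{eq:projection}); the explicit form of the entries is a separate Pfaffian computation.

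To derive the Lax equation, the plan is to differentiate (\ref{minfactor}) in $t_k$ using the moment-matrix evolution $\partial_{t_k} m_\infty = \Lambda^k m_\infty + m_\infty (\Lambda^\top)^k$ induced by (\ref{lattice_moments}). Setting $A := (\partial_{t_k} Q)\, Q^{-1}$, so that $\partial_{t_k} Q^{-1} = -Q^{-1} A$, and using $L^k = Q \Lambda^k Q^{-1}$, the identity reduces, after multiplication by $Q$ on the left and $Q^\top$ on the right, to
\begin{equation*}
(A + L^k)\, J + J (A + L^k)^\top = 0,
\end{equation*}
that is, $A + L^k \in \mathrm{sp}(\infty)$. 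The $2 \times 2$ block lower-triangular structure of $Q$ in (\ref{eq:q_t}) with scalar diagonal blocks places $A$ in $\mathfrak{k}$, so uniqueness of the splitting forces $A = -(L^k)_{\mathfrak{t}}$. Differentiating $L = Q \Lambda Q^{-1}$ then yields $\partial_{t_k} L = [A, L] = [-(L^k)_{\mathfrak{t}}, L]$, which is (\ref{eq:hamiltonian_L}); commutativity of the $\{t_k\}$-flows is automatic because $(\cdot)_{\mathfrak{t}}$ is a Lie-algebra projection.

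For the explicit entries in (\ref{eq:l_t}), I would compute $Q$ directly from (\ref{minfactor}), read as the Pfaffian analogue of an $LDL^\top$ factorization of the skew-symmetric $m_\infty$. The scalar $2\times 2$ diagonal blocks of $Q$ are fixed, up to sign, by the Pfaffian ratios $\tau_{2n+2}/\tau_{2n}$, while the sub-diagonal entries are expressible as Pfaffians of bordered submatrices of $m_{2n}$ whose $t_1$-derivatives, via (\ref{lattice_moments}), encode $\partial_{t_1} \log \tau_{2n}$. A term-by-term evaluation of $Q \Lambda Q^{-1}$ then reproduces the $1$'s at the Hessenberg positions, the off-diagonal factors $(\tau_{2n+2} \tau_{2n-2}/\tau_{2n}^2)^{1/2}$ as ratios of consecutive diagonal entries of $Q$, and the $\pm \partial_{t_1} \log \tau_{2n}$ diagonal entries. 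The main obstacle is precisely this second part: the Lax equation drops out cleanly from the Lie-theoretic setup, but isolating (\ref{eq:l_t}) requires careful bookkeeping with Pfaffian bordered-minor (Jacobi-type) identities. A cleaner alternative I would pursue is to introduce the skew-orthogonal polynomials associated with the inner product (\ref{eq:moments}), whose coefficients form the rows of $Q$; in this basis $L$ represents multiplication by $x$, and its entries are then read off from the three-term-like recursion these polynomials satisfy.
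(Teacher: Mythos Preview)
The paper does not prove this theorem: it is quoted from \cite{vanMoerbeke1999} as a known result, and the only text following it is a brief remark recording the splitting $\mathrm{gl}(\infty)=\mathfrak{t}\oplus\mathfrak{n}$ with $\mathfrak{n}=\mathrm{sp}(\infty)$ and the explicit formula for the two projections. Your proposal is therefore not competing with a proof in the paper but supplying one, and the argument you give for the Lax equation is precisely the standard Adler--van~Moerbeke derivation that the paper's one-line remark ``This follows from the splitting of the Lie algebra $\mathrm{gl}(\infty)$\dots'' is gesturing at: differentiate the factorisation, observe that $(\partial_{t_k}Q)Q^{-1}+L^k\in\mathrm{sp}(\infty)$ while $(\partial_{t_k}Q)Q^{-1}\in\mathfrak{t}$, and conclude by uniqueness of the decomposition. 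Your suggested route to the explicit entries via the skew-orthogonal polynomials attached to the inner product \eqref{eq:moments} is likewise the approach of the cited references, and the paper itself says nothing further about it beyond pointing to \cite{vanMoerbekenotes}.

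One small point of bookkeeping: you write the moment evolution as $\partial_{t_k}m_\infty=\Lambda^k m_\infty+m_\infty(\Lambda^\top)^k$, while the paper records it with $\Lambda^k$ on both sides. With the convention $\Lambda_{ij}=\delta_{i,j-1}$ and $\partial_{t_k}\mu_{ij}=\mu_{i+k,j}+\mu_{i,j+k}$ from \eqref{lattice_moments}, your version with the transpose on the right is the one that matches; this does not affect the argument, since only the combination entering $\mathrm{sp}(\infty)$ matters, but it is worth flagging for consistency.
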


The star symbols $\ast$ in the expression of the Lax matrix \eqref{eq:l_t} stand for suitable differential expressions of the $\tau-$functions $\tau_{2k}$ and $A_{\mathfrak{t}}$ is the projection  defined in 
\eqref{eq:projection}.
This follows from the splitting of the Lie algebra $\text{gl}(\infty)$ 
\begin{equation}
    \text{gl}(\infty) = \mathfrak{t} \oplus \mathfrak{n} \begin{cases} \mathfrak{t} = \{\text{lower triangular matrices of the form \eqref{eq:q_t}}\} \\[1.5ex] 
    \mathfrak{n} = \text{sp}(\infty) = \{ A \text{ such that } J A^{\top} J = A\}
    \end{cases}
\end{equation}
which yields the unique decomposition 
\begin{equation}
    \begin{split}
        A & = A_{\mathfrak{t}} + A_{\mathfrak{n}} \\[1.5ex]
        & = A_{-} - J\,(A_{+})^{\top}\,J + \frac{1}{2} \left( A_{0} - J (A_0)^{\top} J  \right)  + A_{+} + J\,(A_{+})^{\top}\,J + \frac{1}{2} \left( A_{0} + J (A_0)^{\top} J  \right) \, 
    \end{split}
\end{equation}
where $A_{\pm}$ denote, respectively, the upper and lower triangular part of $A$, with all $2\times 2$ diagonal blocks equal to zero, and $A_{0}$ the block diagonal part of $A$ with $2\times 2$ diagonal blocks. 
\newpage

\section{Lattice equations and initial conditions for the Pfaff Hierarchy}
\label{sec:pfaffgen}
In this section we further investigate the structure of the Lax equation \eqref{eq:hamiltonian_L}. Our main observation is that the Lax equation can be recast in the form of a two-component infinite chain.

Let us introduce the following notation for the Lax matrix \eqref{eq:l_t} 

\begin{equation}
L=\begin{pmatrix}
\hspace{2ex} 0 & 1 & 0 & 0 & 0 & 0 & 0 & 0 & 0 & \dots \hspace{2ex} \\[2ex]
\hspace{2ex}w_1^{-1} & v^{0}_{1} & w_1^0 & 0 & 0 & 0 & 0 & 0 & 0 & \dots \hspace{2ex} \\[2ex]
\hspace{2ex} v^{-1}_{1} & w_1^{1} & -v^{0}_{1} & 1 & 0 & 0 & 0 & 0 & 0 & \dots \hspace{2ex} \\[2ex]
\hspace{2ex} w_1^{-2} & v^{1}_{1} & w_2^{-1} & v^{0}_{2} &  w_2^0& 0 & 0 & 0 & 0 & \dots \hspace{2ex} \\[2ex]
\hspace{2ex} v^{-2}_{1} & w_1^2 & v^{-1}_{2} & w_2^1 & -v^{0}_{2} & 1 & 0 & 0 & 0 & \dots \hspace{2ex} \\[2ex]
\hspace{2ex} w_1^{-3} & v^{2}_{1} & w_2^{-2} & v^{1}_{2} & w_3^{-1} & v^{0}_{3} & w_3^0 & 0 & 0 & \dots \hspace{2ex} \\[2ex]
\hspace{2ex} v^{-3}_{1} & w_1^3 & v^{-2}_{2} & w_2^{2} & v^{-1}_{3} & w_3^1 & -v^{0}_{3} & 1 & 0 & \dots \hspace{2ex} \\[2ex]
\hspace{2ex} \vdots & \ddots & \ddots & \ddots & \ddots & \ddots & \ddots & \ddots & \ddots & \ddots \hspace{2ex} \\[2ex]
\end{pmatrix}.
\vspace{2ex} 
\end{equation}
The variables $\{w^{0}_{n}\}_{n\geq 1}$ constitute the non-constant entries in the first upper diagonal (even positions) of $L$, and $\{v^{0}_{n}\}_{n\geq1}$ the entries in the main diagonal of $L$. In the lower triangular part, for any $k>0$, $\{w^{-k}_n\}_{n\geq 1}$ and $\{w^{k}_n\}_{n\geq 1}$ occupy, respectively, odd and even positions on the $(2 k - 1)$-th diagonal. Similarly, the variables $\{v^{-k}_n\}_{n\geq 1}$ and $\{v^{k}_n\}_{n\geq 1}$ occupy odd and even positions on the $2k$-th diagonal. The evolution equations for the variables $v^{k}_{n}$ and $w^{k}_{n}$ follow from the Lax equation \eqref{eq:hamiltonian_L}.
For instance, 
the $t_1$-flow for the variables $v^{\pm k}_n$ and $w^{\pm k}_n$, respectively, is given by following equations
\begin{gather}
    \begin{aligned}
        \partial_{t_1} v^{k}_n =&\,\frac{1}{2} \left(v^0_{n-1} + v^0_n - v^0_{n-k -1} - v^0_{-k+n}\right)v^{k}_n + w^{k-1}_{n} - w^0_n w^{-(k+1)}_{n+1} \\[1.5ex]
    & - w^{-1}_n w^{-k}_n - w^0_{n-1}w^{-(k-1)}_{n-1}, \hspace{5ex} k<-1 \\[1.5ex]
    \partial_{t_1} v^{-1}_n =&\, \frac{1}{2}\left(v^0_{n-1} - v^0_{n+1}\right)v^{-1}_n + w^{-2}_n - w^0_n - w^{-1}_n w^1_n - w^0_{n-1}w^2_{n-1} \\[1.5ex]
    \partial_{t_1} v^0_n =&\, w^0_n w^1_n\\[1.5ex]
    \partial_{t_1} v^1_n =&\, \frac{1}{2}\left( v^0_{n+1} - v^0_{n-1}\right)v^1_n - w^{-2}_n + w^0_n + w^{-1}_{n+1}w^1_n+w^0_{n+1}w^2_n  \\[1.5ex]
    \partial_{t_1}v^k_n =&\, \frac{1}{2}\left(v^0_{k+n} + v^0_{k+n-1}-v^0_n - v^0_{n-1}\right)v^k_n + w^0_{n+k-1} w^{k-1}_n + w^{-1}_{n+k} w^k_n \\[1.5ex] & + w^0_{n+k} w^{k+1}_n - w^{-(k+1)}_n, \;\;\;\; k>1\\[1.5ex]
    \label{eq:t1v}
\end{aligned}
\end{gather}
and
\begin{gather}
\begin{aligned}
    \partial_{t_1} w^{k}_n =&\,\frac{1}{2} \left(v^0_{n-k-1} + v^0_{n-k-2}+ v^0_{n} + v^0_{n-1}\right)w^{k}_n + w^{0}_{n-k-2}v^{k+2}_{n} - w^0_n v^{-(k+2)}_{n+1} \\[1.5ex]
    &+ w^{-1}_{n-k-1} v^{k+1}_n - w^{-1}_{n}v^{-(k+1)}_{n} +w^0_{n-k-1} v^{k}_n - w^0_{n-1}v^{-k}_{n-1},\hspace{5ex} k<-1 \\[1.5ex]
    \partial_{t_1} w^{-1}_n =&  \,w^{0}_n v^{-1}_n - w^0_{n-1}v^1_{n-1}  \\[1.5ex]
    \partial_{t_1} w^0_n =& \,\frac{1}{2}\left(v^0_{n+1} -2 v^0_n + v^0_{n-1}\right)w^0_n\\[1.5ex]
    \partial_{t_1} w^k_n =&\, -\frac{1}{2}\left( v^0_{n+k} +  v^0_{n+k-1} +v^0_n + v^0_{n-1}\right)w^k_n +v^k_n - v^{-k}_n, \hspace{5ex} k > 0. \\[1.5ex]
\label{eq:t1w}
\end{aligned}
\end{gather}

\vspace{-10pt}
To give an idea of the increasing complexity of higher flows, the $t_2$-flows for both variables $v^{\pm k}_n$ and $w^{\pm k}_n$ are also reported in Appendix \ref{app:t2flow_all}.

We now consider the initial condition for the Lax matrix $L$. From 
\eqref{eq:l_t}, we have that the component~$w^{0}_n(\mathbf{t})$ can be expressed in terms of $\tau_{2n}(\mathbf{t})$ as follows (see \cite{vanMoerbekenotes})
\begin{equation}
w^{0}_n(\mathbf{t})=\bigg(\frac{\tau_{2n+2}(\mathbf{t})\,\tau_{2n-2}(\mathbf{t})}{\tau_{2n}^2(\mathbf{t})}\bigg)^{1/2}\,.\label{eq:initialw}
\end{equation} 
The function $\tau_{2n}(\mathbf{0})$ is given by a Selberg's integral which can be evaluated explicitly so that
\begin{equation}
\tau_{2n}(\mathbf{0})=\pi^{n/2}\, \prod_{k=0}^{n-1} 2^{-2k} (2k)!.
\label{eq:tau_tempi_zero}
\end{equation} 
Therefore, equations \eqref{eq:initialw} and \eqref{eq:tau_tempi_zero} imply
\begin{equation}
 w^{0}_n(\mathbf{0}) =  2 \sqrt{\pi} \sqrt{2n(2n-1)}.
\end{equation}
Similarly, using the expression for $v^0_n(\mathbf{t})$ obtained in \cite{vanMoerbekenotes}, i.e.
\begin{equation}
    v^{0}_n(\mathbf{t}) = \partial_{t_1}\log \tau_{2n}(\mathbf{t}).
\end{equation}
one can evaluate the initial datum  $v^0_n(\mathbf{0})$.
Hence, from the definition of  $\tau_{2n}(\mathbf{t})$ given in \eqref{eq:partition_and_tau_intro}, and due to the skew symmetry of the integration measure we have
\begin{equation}
v^{0}_n(\mathbf{0}) = 0.
\end{equation}
In general, the variables $v^{\pm k}_n$ are represented as suitable combinations of integrals of odd functions and therefore $v^{\pm k}_n(\mathbf{0}) = 0$.
We conlcude that the Lax matrix $L$ evaluated at $\mathbf{t}= \mathbf{0}$ takes the following form

\begin{equation}
\label{evenLax}
L({\bf{0}}) =\begin{pmatrix}
\hspace{2ex} 0 & 1 & 0 & 0 & 0 & 0 & 0 & 0 & \dots \hspace{2ex} \\[1.5ex]
\hspace{2ex}w_1^{-1}({\bf 0}) & 0 & w_1^0({\bf 0}) & 0 & 0 & 0 & 0 & 0 & \dots \hspace{2ex} \\[1.5ex]
\hspace{2ex} 0 & w_1^{1}({\bf 0}) & 0 & 1 & 0 & 0 & 0 & 0 & \dots \hspace{2ex} \\[1.5ex]
\hspace{2ex} w_1^{-2}({\bf 0}) & 0 & w_2^{-1}({\bf 0}) & 0 &  w_2^0({\bf 0})& 0 & 0 & 0 & \dots \hspace{2ex} \\[1.5ex]
\hspace{2ex} 0 & w_1^2({\bf 0}) & 0 & w_2^1({\bf 0}) & 0 & 1 & 0 & 0 & \dots \hspace{2ex} \\[1.5ex]
\hspace{2ex} w_1^{-3}({\bf 0}) & 0 & w_2^{-2}({\bf 0}) & 0 & w_3^{-1}({\bf 0}) & 0 & w_3^0({\bf 0}) & 0 & \dots \hspace{2ex} \\[1.5ex]
\hspace{2ex} 0 & w_1^3({\bf 0}) & 0 & w_2^{2}({\bf 0}) & 0 & w_3^1({\bf 0}) & 0 & 1 & \dots \hspace{2ex} \\[1.5ex]
\hspace{2ex} \vdots & \ddots & \ddots & \ddots & \ddots & \ddots & \ddots & \ddots & \ddots \hspace{2ex} \\[1.5ex]
\end{pmatrix} \\[1ex]
\end{equation}

\section{The reduced even Pfaff Hierarchy}
\label{sec:paffeven}
In this section we consider the SME with even power interactions specified by the partition function~\eqref{eq:partition_n_even_intro} and show that it provides a solution to a reduction of the even Pfaff Hierarchy i.e. the commuting flows \eqref{eq:hamiltonian_L} associated to the even times $t_{2k}$ only. 

In this case, equation \eqref{eq:tau_pfaffian}, i.e.  $\tau_{2n}(\mathbf{t}) = {\rm pf}(m_{2n}(\mathbf{t}))$, still holds with $m_{2n} = (\mu_{ij})_{0\leq i,j\leq 2n-1}$ and
\begin{equation}
\mu_{ij}(\mathbf{t}) = \langle x^i\,,\,y^j \rangle_{\mathbf{t}} = \int \int_{\mathbb{R}^2} x^i \, y^j \,\sigma(x-y)\,\text{e}^{\sum_{k\geq 1}t_{2k}\,(x^{2k}+y^{2k})}\text{e}^{-\frac{1}{2}(x^2 + y^2)}\,dx\,dy.
\label{eq:moments_even}
\end{equation}
Hence, the moments matrix $m_{2n}(\mathbf{t})$ reads as
\begin{equation}
m_{2n}=\big( \mu_{i\,j} \big)_{0\le i,j\le 2n-1}= \begin{pmatrix}
0 & \mu_{0\,1} & 0 & \mu_{0\,3} & 0 & \mu_{0\,5} & \dots \\[1ex]  
-\mu_{0\,1} & 0 & \mu_{1\,2} & 0 & \mu_{1\,4} & 0 & \dots \\[1ex]  
0 & -\mu_{1\,2} & 0 & \mu_{2\,3} & 0 & \mu_{2\,5} & \dots\\[1ex]  
-\mu_{0\,3} & 0 & -\mu_{2\,3} & 0 & \mu_{3\,4} & 0 & \dots\\[1ex]  
0 & -\mu_{1\,4} & 0 & -\mu_{3\,4} & 0 & \mu_{4\,5} & \dots\\[1ex]  
-\mu_{0\,5} & 0 & -\mu_{2\,5} & 0 & -\mu_{4\,5} & 0 & \dots \\[1ex] 
\vdots & \vdots & \vdots & \vdots & \vdots & \vdots & \ddots \\[1ex]  
\end{pmatrix}.
\end{equation} 
The moments \eqref{eq:moments_even} satisfy the evolution equations
\begin{equation}
\frac{\partial \mu_{ij}}{\partial t_{2k}}=\mu_{i+2k,j}+\mu_{i,j+2k}\end{equation}
which imply
\begin{equation}
\frac{\partial m_{\infty}}{\partial t_{2k}} = \Lambda^{2k}\,m_{\infty}+m_{\infty} \, \Lambda^{2k} \,.
\end{equation} 
We consider the reduction of the Lax equation \eqref{eq:hamiltonian_L} of the form
\begin{equation}
\frac{\partial L}{\partial t_{2k}} = \left[ \,- (L^{2k})_{\mathfrak{t}}\,, L \, \right]  \,,
\label{eq:hamiltonian_red_L}
\end{equation}
with 
\begin{equation}
L = \begin{pmatrix}
\hspace{2ex} 0 & 1 & 0 & 0 & 0 & 0 & 0 & 0 & 0 & \dots \hspace{2ex} \\[1.5ex]
\hspace{2ex}w_1^{-1} & 0 & w_1^0 & 0 & 0 & 0 & 0 & 0 & 0 & \dots \hspace{2ex} \\[1.5ex]
\hspace{2ex} 0 & w_1^{1} & 0 & 1 & 0 & 0 & 0 & 0 & 0 & \dots \hspace{2ex} \\[1.5ex]
\hspace{2ex} w_1^{-2} & 0 & w_2^{-1} & 0 &  w_2^0& 0 & 0 & 0 & 0 & \dots \hspace{2ex} \\[1.5ex]
\hspace{2ex} 0 & w_1^2 & 0 & w_2^1 & 0 & 1 & 0 & 0 & 0 & \dots \hspace{2ex} \\[1.5ex]
\hspace{2ex} w_1^{-3} & 0 & w_2^{-2} & 0 & w_3^{-1} & 0 & w_3^0 & 0 & 0 & \dots \hspace{2ex} \\[1.5ex]
\hspace{2ex} 0 & w_1^3 & 0 & w_2^{2} & 0 & w_3^1 & 0 & 1 & 0 & \dots \hspace{2ex} \\[1.5ex]
\hspace{2ex} \vdots & \ddots & \ddots & \ddots & \ddots & \ddots & \ddots & \ddots & \ddots & \ddots \hspace{2ex} \\[1.5ex]
\end{pmatrix},\label{eq:L_even} \\[1ex] 
\end{equation}
that is the Lax matrix associated to the SME with even power interactions is obtained from the general one by setting the variables $v^0_n$, $v^{\pm k}_n$ identically equal to zero for any $\mathbf{t}_{k}$. In other words, the partition function gives a solution to a reduction of the even Pfaff Hierarchy which preserves the zeros of the initial Lax matrix $L(\mathbf{0})$ given by the expression \eqref{evenLax}. 

The first non-trivial flow of the reduced even Lax hierarchy \eqref{eq:hamiltonian_red_L} provides the following evolution equations for the variables $w^k_n$

\begin{gather}
\label{pfaff_lattice}
\begin{aligned}
\partial_{t_2} w^{k}_{n} =&~ \frac{1}{2} \left(w^{k}_{n} w^{0}_{n} w^{1}_{n} + w^{k}_{n} w^{0}_{n-k-1} w^{1}_{n-k-1} - w^{k}_{n} w^{0}_{n-1} w^{1}_{n-1} - w^{k}_{n} w^{0}_{n-k-2} w^{1}_{n-k-2} \right) \\[1.2ex]
&+w^{k+1}_{n+1} w^{0}_{n} + w^{k-1}_{n} w^{0}_{n-k-1} - w^{k-1}_{n-1} w^{0}_{n-1} - w^{k+1}_{n} w^{0}_{n-k-2}, \qquad k <- 1\\[1.2ex]
\partial_{t_2} w^{-1}_{n}=&~ w^{0}_{n} \left( w^{-1}_{n} w^{1}_{n} + w^{-2}_{n} + w^{0}_{n} \right) - w^{0}_{n-1} \left(w^{-1}_{n} w^{1}_{n-1} + w^{-2}_{n-1} \right) - \left( w^{0}_{n-1} \right)^{2} \\[1.2ex]
\partial_{t_2} w_{n}^{0} =&~ \frac{1}{2} \left( w^{0}_{n+1} w^{1}_{n+1} - w^{0}_{n-1} w^{1}_{n-1} \right)w^0_n + \left(w^{-1}_{n+1} - w^{-1}_{n}\right)w^0_n  \\[1.2ex]
\partial_{t_2} w_{n}^{1} =&~ \frac{1}{2} \left( w^{0}_{n-1} w^{1}_{n-1} w^{1}_{n}  - w^{0}_{n+1} w^{1}_{n} w^{1}_{n+1}\right) + w^{0}_{n+1} w^{2}_{n} - w^{0}_{n-1} w^{2}_{n-1} \\[1.2ex]
\partial_{t_2} w^{k}_{n} =&~ \frac{1}{2} \left(w^{0}_{n-1} w^{1}_{n-1} w^{k}_{n} + w^{0}_{n+k-1} w^{1}_{n+k-1} w^{k}_{n} -w^{0}_{n} w^{1}_{n} w^{k}_{n} - w^{0}_{n+k} w^{1}_{n+k} w^{k}_{n} \right) \\[1.2ex]
&+ w^{0}_{n} w^{k-1}_{n+1} + w^{0}_{n+k} w^{k+1}_{n} - w^{0}_{n-1} w^{k+1}_{n-1} - w^{0}_{n+k-1} w^{k-1}_{n}, \qquad k> 1.
\end{aligned}
\end{gather}
The derivation described above naturally compares with the case of the HME, as studied in \cite{HermitianPRE}, where the partition function corresponds to a particular solution of the Toda Lattice and the reduction to even power interactions provides a solution to the Volterra Lattice. The Volterra Lattice is effectively an independent integrable system as it arises from a reduction of the even Toda Hierarchy and it is not compatible with the odd flows of the Toda Hierarchy. Similarly, the reduction of the even Pfaff Hierarchy obtained from the SME with even interactions is not compatible with the odd flows of the Pfaff Hierarchy.

\section{Thermodynamic limit and integrable hydrodynamic chain}
\label{sec:limit}
We study the large $n$ asymptotic properties of the SME via the continuum limit of the Pfaff Lattice equations. In particular, we focus on the case of even power interactions \eqref{eq:partition_n_even_intro} described by the Lax matrix \eqref{eq:L_even} which satisfies the Lax equations \eqref{eq:hamiltonian_red_L}. 

As observed above, the lattice equations for the reduced even Pfaff Hierarchy \eqref{pfaff_lattice} constitute an infinite chain for the variables $w^{k}_{n}$, where $k \in \mathbb{Z}$ labels the components of the chain and $n \in \mathbb{N}$ labels points on the lattice. In Section \ref{sec:pfaffgen}, we noted that the variables $w^{k}_{n}$ can be expressed in terms of suitable elements of the sequence of $\tau-$functions $\{\tau_{2n}\}_{n\geq 1}$ and their derivatives. As~$n \to \infty$, for the variables $w^{k}_{n}$ we have $w^{k}_{n+1} - w^{k}_{n} = O(\varepsilon)$, with $\varepsilon \to 0$ such that  $x = \varepsilon n$ remains finite. In the following, we derive the continuum limit equations for the chain and study the integrability at the leading order with respect to the $\varepsilon$ expansion. We illustrate the result for the first equation of the hierarchy given by the $t_{2}$-flows. As mentioned in section \ref{sec:intro}, our considerations extend to the $t_4$- and $t_6$-flows as well, and we conjecture they hold for any equation of the hierarchy.

Let us introduce the interpolation function $w^{k}(x/\varepsilon)$ with $x = \varepsilon n$ 
so that $w^{k}(n) = w^{k}_{n}$, and define
$$
u^{k}(x) := w^{k}\left (\frac{x}{\varepsilon} \right)
$$ 
 with $u^{k}(x \pm \varepsilon ) = w^{k}_{n \pm 1}$. Substituting $u^k(x)$ into the equations~\eqref{pfaff_lattice}, expanding in Taylor series  for $\varepsilon \to 0$ and setting $t = \varepsilon \, t_{2}$, at the leading order $O(\varepsilon^{0})$ we get the following system of PDEs
\begin{gather}
\label{hydrochain}
\begin{aligned}
u^{k}_{t} =&\left( (k+2) u^{k+1} - k u^{k-1} + u^{1} u^{k} \right) u^{0}_{x}  + u^{0} u^{k} u^{1}_{x}  + u^{0} u^{k-1}_{x} + u^{0} u^{k+1}_{x}, \qquad k < 0\\[1.2ex]
u^{0}_{t} =&~u^{0} u^{1} u^{0}_{x} + \left (u^{0} \right)^{2} u^{1}_{x} + u^{0} u^{-1}_{x} \\[1.2ex]
u^{1}_{t} =& \left (2 u^{2} - \left (u^{1} \right )^{2} \right ) u^{0}_{x} - u^{0} u^{1} u^{1}_{x} + u^{0} u^{2}_{x} \\[1.2ex]
u^{k}_{t} =& \left( (k+1) u^{k+1} - (k-1) u^{k-1} - u^{1} u^{k} \right) u^{0}_{x} - u^{0} u^{k} u^{1}_{x} + u^{0} u^{k-1}_{x} + u^{0} u^{k+1}_{x}, \qquad k>1
\end{aligned}
\end{gather} 
with the notation $f_{t} = \partial_{t} f$, $f_{x} = \partial_{x} f$. In particular, we note that the system~\eqref{hydrochain} is an infinite chain of quasilinear PDEs of hydrodynamic type. In fact, the equations of the chain are of the form
\begin{equation}
u^{k}_{t} =  a^{k}_{0} \, u^{0}_{x} +a^{k}_{1} \,  u^{1}_{x} +a^{k}_{k-1} \, u^{k-1}_{x} + a^{k}_{k+1}\,  u^{k+1}_{x}
\end{equation}
or equivalently
\begin{equation}
\label{hydromatrixchain}
{\bf u}_{t} = A({\bf u}) {\bf u}_{x} 
\end{equation}
where $A ({\bf u}) = \{ a^{k}_{j} \}_{j,k=-\infty}^{+\infty}$ is an infinite matrix such that $a^{k}_{j} = 0$ if $\notin \{0,1,k-1,k+1 \}$ and
\begin{gather}
\begin{aligned}
a^{k}_{0}&=   \begin{cases}
\hspace{1ex} (k+2) u^{k+1} - k u^{k-1} + u^{1} u^{k} &\textup{if} \quad k<0  \\[1.2ex]
\hspace{1ex} u^{0} u^{1} &\textup{if} \quad k = 0  \\[1.2ex]
\hspace{1ex} (k+1) u^{k+1} - (k-1) u^{k-1} - u^{1} u^{k} &\textup{if} \quad k \geq 1  \\[1.2ex]
\end{cases} \hspace{6ex} a^{k}_{1} =  \begin{cases}
\hspace{1ex} u^{0} u^{k} &\textup{if} \quad k \leq 0  \\[1.2ex]
\hspace{1ex}  - u^{0} u^{k} &\textup{if} \quad k \geq 1  
\end{cases}  \\[1.2ex] 
a^{k}_{k-1}&= \begin{cases}
\hspace{1ex} u^{0} &\textup{if} \quad k \neq  1  \\[1.2ex]
\hspace{1ex}\left(2 u^2 - (u^1)^2\right) &\textup{if} \quad k = 1
\end{cases} \hspace{6ex}  
 a^{k}_{k+1} =  \begin{cases}
\hspace{1ex} u^{0} &\textup{if} \quad k \neq 0  \\[1.2ex]
\hspace{1ex}  (u^0)^2 &\textup{if} \quad k = 0  
\end{cases}
\end{aligned}
\label{eq:a}
\end{gather}
By applying the same procedure, one can construct a hierarchy of infinitely many commuting flows, each of them in the form of a hydrodynamic chain \eqref{hydromatrixchain_intro} from the thermodynamic limit of the higher flows of the hierarchy \eqref{eq:hamiltonian_red_L}. 
The hydrodynamic chain \eqref{hydrochain} is integrable as it possesses an infinite hierarchy of commuting flows.
In the following, we show that the hydrodynamic chain \eqref{hydrochain} is diagonalisable and integrable according to  the criterion introduced in~\cite{chain}, namely the existence of integrable hydrodynamic reductions in an arbitrary number of components. 

Following \cite{chain}, the diagonalisability of the hydrodynamic chain is established by studying the Haantjes tensor
\begin{equation}
H^{i}_{jk} = N^{i}_{pr} \,  a^{p}_{j} \,  a^{r}_{k} - N^{p}_{jr} \,  a^{i}_{p} \,  a^{r}_{k} - N^{p}_{rk} \,  a^{i}_{p} \,  a^{r}_{j} + N^{p}_{jk} \, a^{i}_{r} \, a^{r}_{p}\label{eq:haantjes}
\end{equation}
where $N^i_{jk}$ is the Nijenhuis tensor
\begin{equation}
N^{i}_{jk} = a^{p}_{j} \,  \partial_{u^{p}} a^{i}_{k} -a^{p}_{k} \, \partial_{u^{p}} a^{i}_{j} - a^{i}_{p} \left(\partial_{u^{j}} a^{p}_{k} -\partial_{u^{k}} a^{p}_{j} \right).\label{eq:Ndef}
\end{equation}
In the case of infinite matrices, both Nijenhuis and Haantjes  tensors are well defined for the so called matrices of {\it chain class}.
\begin{definition}[Chain class matrices \cite{chain}] An infinite matrix $A({\bf u})$ is said to belong to the class $C$ (chain class) if it satisfies the following two properties:
\begin{enumerate}
\item[a)] each row of $A({\bf u})$ contains finitely many non-zero elements;
\item[b)] each matrix element of $A({\bf u})$ depends on finitely many variables $u^{k}$.
\end{enumerate}
\end{definition}
Bearing in mind the form of the matrix $A({\bf u})$ as specified in \eqref{eq:a}, we have the following
\begin{proposition}
Given the chain \eqref{hydrochain}, the associated matrix $A({\bf u})$ in \eqref{hydromatrixchain} belongs to the chain class.
\end{proposition}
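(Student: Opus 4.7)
The proposition is essentially a direct verification from the explicit expression \eqref{eq:a} for the entries $a^{k}_{j}$ of $A(\mathbf{u})$, so the plan is to read off both properties from that formula row by row.

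First, to establish property (a), I would fix an arbitrary row index $k \in \mathbb{Z}$ and observe from \eqref{eq:a} that $a^{k}_{j}$ is declared to vanish unless $j \in \{0, 1, k-1, k+1\}$. Thus every row has at most four non-zero entries (and at most three when collisions occur, e.g.\ $k=0$ gives $k+1 = 1$, $k=1$ gives $k-1 = 0$, and $k=2$ gives $k-1 = 1$). In particular, the number of non-zero entries in each row is bounded uniformly by four, which is clearly finite.

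Next, for property (b), I would go through the four non-zero entries in the $k$-th row and list their functional dependencies. The entry $a^{k}_{0}$ involves at most $u^{k-1}, u^{k}, u^{k+1}$ and $u^{1}$; $a^{k}_{1}$ involves at most $u^{0}$ and $u^{k}$; $a^{k}_{k-1}$ involves only $u^{0}$, except for the special case $k=1$ where it involves $u^{1}$ and $u^{2}$; and $a^{k}_{k+1}$ involves only $u^{0}$. Taking the union, each matrix element depends on at most the five variables $\{u^{0}, u^{1}, u^{k-1}, u^{k}, u^{k+1}\}$, which is finite.

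Since both defining conditions of the chain class are satisfied for every $k$, I would conclude that $A(\mathbf{u}) \in C$. There is no real obstacle here: the proposition amounts to a bookkeeping check on the sparsity pattern and functional dependence of the coefficients already computed in \eqref{eq:a}, and the argument is essentially a single sentence once the four cases in \eqref{eq:a} are listed. The role of the proposition is to certify that the Nijenhuis and Haantjes tensors \eqref{eq:Ndef}--\eqref{eq:haantjes} are well defined for $A(\mathbf{u})$, so that the diagonalisability test of \cite{chain} can be meaningfully applied in what follows.
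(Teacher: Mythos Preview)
Your proposal is correct and matches the paper's approach exactly: the paper states the proposition without proof, treating it as immediate from the explicit form \eqref{eq:a} of $A(\mathbf{u})$, and your argument simply spells out the bookkeeping (at most four non-zero entries per row, each depending on at most the variables $u^{0},u^{1},u^{k-1},u^{k},u^{k+1}$) that makes this immediate.
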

Moreover, based on the Haantjes theorem given in \cite{HaantjesTH}, the following definition extends the concept of diagonalisability to the case of infinite matrices:
\begin{definition}[Diagonalisable hydrodynamic chains~\cite{chain}]
A hydrodynamic chain from the class $C$ is said to be diagonalisable if all components of the corresponding Haantjes tensor are zero.
\end{definition}
We show that our chain fulfils the definition above.
\begin{proposition}
\label{th:Htest}
Given the chain \eqref{hydrochain}, the Haantjes tensor of the associated matrix $A({\bf u})$  vanishes. \label{prop:diagonalisability}
\end{proposition}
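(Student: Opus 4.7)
The plan is to verify the Haantjes identity by direct, but carefully organised, computation, exploiting the sparse banded structure of $A(\mathbf{u})$. Recall from \eqref{eq:a} that, for each row $k$, the entry $a^k_j$ is nonzero only if $j\in\{0,1,k-1,k+1\}$, and moreover each such entry depends on at most four of the variables $u^m$, namely $m\in\{0,1,k-1,k,k+1\}$. This is precisely what makes $A$ a chain-class matrix, as already noted in the preceding proposition, and it guarantees that all the sums implicit in \eqref{eq:Ndef} and \eqref{eq:haantjes} are finite, so the tensors are well defined componentwise.

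First I would compute the Nijenhuis tensor $N^i_{jk}$ defined in \eqref{eq:Ndef}. Since the indices $p$ contributing non-trivially to $a^p_j\,\partial_{u^p}a^i_k$ must simultaneously satisfy $p\in\{0,1,j-1,j+1\}$ (so that $a^p_j\neq 0$) and $p\in\{0,1,i-1,i,i+1\}$ (so that $a^i_k$ actually depends on $u^p$), the number of nonzero $N^i_{jk}$ is drastically limited: generically only triples $(i,j,k)$ in which at least two of $i,j,k$ lie within $\{0,1\}$ or at distance $1$ from one another produce contributions. Exploiting antisymmetry $N^i_{jk}=-N^i_{kj}$ cuts the bookkeeping in half. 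The list of surviving components is exactly what the paper records in Appendix~\ref{sec:N}, which I would take as the working output of this step; the verification reduces to a few tens of routine derivative computations on the piecewise formulas \eqref{eq:a}, distinguishing the generic case from the ``boundary'' cases $k=-1,0,1,2$ where those formulas change.

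Next I would substitute the list of nonzero $N^p_{qr}$ into the Haantjes tensor
\[
H^{i}_{jk} = N^{i}_{pr}\,a^{p}_{j}\,a^{r}_{k} - N^{p}_{jr}\,a^{i}_{p}\,a^{r}_{k} - N^{p}_{rk}\,a^{i}_{p}\,a^{r}_{j} + N^{p}_{jk}\,a^{i}_{r}\,a^{r}_{p}.
\]
Again each of the four terms, read as a double sum over $p,r$, reduces to a handful of summands: in the first term, for instance, $a^p_j$ and $a^r_k$ force $p\in\{0,1,j\pm 1\}$ and $r\in\{0,1,k\pm 1\}$, and on top of that $N^i_{pr}$ must itself be nonzero. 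So for each fixed $(i,j,k)$ the identity $H^i_{jk}=0$ becomes a polynomial identity among at most a few dozen monomials in the $u^m$, which one checks collapses to zero. Using $H^i_{jk}=-H^i_{kj}$ and the obvious symmetry between rows with $k<0$ and $k>1$, I would reduce the cases to: (a) the generic case $i,j,k$ pairwise far from each other and from $\{0,1\}$, where $H^i_{jk}$ is trivially $0$ because every $N^i_{pr}$ appearing is zero; (b) the ``one-special-index'' cases in which exactly one of $i,j,k$ lies in a neighbourhood of $\{0,1\}$ or one pair is adjacent; and (c) the finitely many fully localised cases where all three indices sit in $\{-1,0,1,2,3\}$.

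The main obstacle is not conceptual but combinatorial: category (c) above involves a nontrivial number of triples where the piecewise definition of $a^k_j$ produces genuine extra terms (notably the $(u^0)^2$ and $2u^2-(u^1)^2$ entries in the rows $k=0,1$), and the cancellations in $H^i_{jk}$ mix these quadratic entries with the linear ones in a way that must be tracked carefully. Here I would proceed symbolically, grouping contributions by the variable whose derivative they come from, and check that like terms cancel pair by pair. Since by the preceding discussion the whole computation is bounded by a finite (albeit large) number of explicit cases, the result follows once every case has been checked; the commutativity of the higher flows of the hierarchy, together with the remark made just before the proposition, then allows one to extend the conclusion to the full hydrodynamic hierarchy without repeating the analysis.
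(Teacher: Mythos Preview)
Your proposal is correct and follows essentially the same route as the paper: both proofs proceed by direct inspection, first computing the nonzero components of the Nijenhuis tensor (exactly the list in Appendix~\ref{sec:N}) by exploiting the sparse structure of $A(\mathbf{u})$, and then checking that the finitely many potentially nonzero Haantjes components collapse to zero. One small caveat: the ``symmetry between rows with $k<0$ and $k>1$'' you invoke to halve the work is not actually present in the formulas \eqref{eq:a} (compare the signs and index shifts in $a^k_0$ and $a^k_1$ for the two ranges), so you will need to treat those cases separately, as the paper implicitly does; this does not affect the correctness of the argument, only the bookkeeping.
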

\begin{proof} The proof proceeds by direct inspection. Observing that by definition $N^i_{jk}$ is antisymmetric under exchange of $j$ and $k$, i.e. $N^i_{jk} = - N^i_{kj}$, a direct calculation shows that $N^0_{jk} = 0$ for any $j$ and $k$. Similarly, for $i \neq 0$ the only nonzero elements of $N^i_{jk}$ are
\begin{equation}
N^i_{0\,\pm1}, \,N^{i}_{0\,i}, \,N^{i}_{0\,i\pm 1}, \,N^i_{1\,i \pm 1} \,N^{i}_{-1\,i\pm 1}, \,N^{i}_{-1\,+1}
\label{eq:N_nonzero}
\end{equation}  
and their counterparts with the lower indices exchanged. The above components can be computed for a generic value of $i$, and their explicit expressions are listed in Appendix \ref{sec:N}. 
The structure of $N^{i}_{jk}$ and~$A({\bf u})$ induces constraints on the range of values the indices $p$ and $r$ can take in the expression of the Haantjies tensor \eqref{eq:haantjes}, and consequently on potential nonzero elements. Indeed, the form of $N^i_{jk}$, specified by the elements \eqref{eq:N_nonzero},  implies that for any fixed $i$ the only components of $H^i_{jk}$ which are not trivially zero are those for $j, k\in\{0,\pm 1,\pm 2, 3, i, i\pm1, i\pm2, i\pm3\}$. Given the explicit expressions for $a^k_j$ in \eqref{eq:a} and $N^i_{jk}$ in Appendix \ref{sec:N}, a direct calculation demonstrates that $H^i_{jk} = 0$ for the listed values of the lower indices. This proves the statement. 
\end{proof}

We now study the integrability of the chain \eqref{hydromatrixchain} by following the approach based on the method of hydrodynamic reductions applied to the system \eqref{hydrochain}. We look for solutions of the form 
\begin{equation}
\label{Nphases}
u^{k} = u^{k}(R^{1},R^{2},\dots, R^{N})
\end{equation} 
for an arbitrary number $N$ of components $R^{i} = R^{i}(x,t)$. The functions $\{R^i\}_{i=1}^N$ are the Riemann invariants and satisfy by definition the diagonal system
\begin{equation}
\label{Rinvariant}
R^{i}_{t} = \lambda^{i}(R^{1},\dots,R^{N}) R^{i}_{x}
\end{equation}
where the characteristic speeds $\lambda^{i}$ are such that the system \eqref{Rinvariant} possesses the semi-Hamiltonian property, that is
\begin{equation}
\label{semiH}
\partial_{k} \left (\frac{\partial_{j} \lambda^{i}}{\lambda^{j} - \lambda^{i}} \right) = \partial_{j} \left (\frac{\partial_{k} \lambda^{i}}{\lambda^{k} - \lambda^{i}} \right),
\end{equation}
with the notation $\partial_{i} = \partial_{R^{i}}$.
The diagonal form of the system \eqref{Rinvariant} and the condition \eqref{semiH} guarantee that equations \eqref{Rinvariant} constitute a system of conservation laws \cite{Sevennec} which is integrable via the generalised hodograph method \cite{Tsarev}. Substituting the assumption~\eqref{Nphases} into the system~\eqref{hydromatrixchain} and using \eqref{Rinvariant} we obtain the equations of the form
\begin{equation}\
\label{eigenA}
\lambda^{i} \, \partial_{i} {\bf u} = A({\bf u}) \, \partial_{i} {\bf u}, \qquad i = 1,2,\dots N
\end{equation}
where we used the fact that $R^{i}_{x}$ for $i=1,\dots,N$ are independent. We observe that, due to the specific sparse structure of the matrix $A({\bf u})$,  the components of the eigenvectors $\partial_{i}{{\bf u}}$ can be parametrised  in terms of the components $\partial_{i} u^{0}$ and $\partial_{i} u^{1}$. 

Let us consider, for example, the equations for $\partial_{i} u^{-2}$, $\partial_{i} u^{-1}$, $\partial_{i} u^{2}$ and $\partial_{i} u^{3}$:
\begin{gather}
\label{hchain}
\begin{aligned} 
\partial_{i}{u^{-2}} =&\, \frac{1}{(u^{0})^{2}} \left ( (\lambda^{i})^{2} - u^{0} u^{1} \lambda^{i} - u^{0} (2  u^{0} + u^{-2} + u^{-1} u^{1}) \right) \partial_{i} u^{0} - \left( \lambda^{i} + u^{-1} \right) \partial_{i} u^{1} \\[1.2ex]
\partial_{i}{u^{-1}} =&\, \left(\frac{\lambda^{i}}{u^{0}} - u^{1} \right) \partial_{i} u^{0} - u^{0} \partial_{i} u^{1} \\[1.2ex]
\partial_{i}{u^{2}} =&\, \frac{1}{u^{0}} \left((u^{1})^{2} - 2 u^{2} \right) \partial_{i} u^{0} + \frac{1}{u^{0}} \left(\lambda^{i} + u^{0} u^{1}  \right) \partial_{i} u^{1} \\[1.2ex] 
\partial_{i}{u^{3}} =&\, \frac{1}{(u^{0})^{2}} \left( \left ((u^{1})^{2} - 2 u^{2} \right ) \lambda^{i} + u^{0} \left(u^{1} (1 + u^{2}) - 3 u^{3} \right) \right) \partial_{i} u^{0} \nn \\ &+ \frac{1}{(u^{0})^{2}} \left ( (\lambda^{i})^{2} + u^{0} u^{1} \lambda^{i} + (u^{0})^{2} (u^{2} -1) \right) \partial_{i} u^{1}\,, \qquad i = 1,\dots, N.
\end{aligned}
\end{gather}
The compatibility conditions
\[
\partial_{j}\partial_{i} u^{-2} = \partial_{i}\partial_{j} u^{-2} \qquad  \partial_{j}\partial_{i} u^{-1} = \partial_{i}\partial_{j} u^{-1}  \qquad \partial_{j}\partial_{i} u^{2} = \partial_{i}\partial_{j} u^{2} \qquad \partial_{j}\partial_{i} u^{3} = \partial_{i}\partial_{j} u^{3}
\]
lead to a so called Gibbons-Tsarev system. For our chain this takes the form
\begin{gather} 
\label{GT}
\begin{aligned} 
\partial_{j} \lambda^{i} &= \frac{4 (u^{0})^{2} - \lambda^{i} \lambda^{j}}{u^{0} (\lambda^{i} - \lambda^{j})} \partial_{j} u^{0} \\[1.2ex]
\partial_{i} \lambda^{j} &= \frac{4 (u^{0})^{2} - \lambda^{i} \lambda^{j}}{u^{0} (\lambda^{j} - \lambda^{i})} \partial_{i} u^{0}  \\[1.2ex]
\partial_{i} \partial_{j} u^{0} &= \frac{(\lambda^{i})^{2} + (\lambda^{j})^{2} - 8 (u^{0})^{2}}{u^{0} (\lambda^{i} - \lambda^{j})^{2}} \partial_{i} u^{0} \partial_{j} u^{0}  \\[1.2ex]
\partial_{i} \partial_{j} u^{1} &= - \frac{(\lambda^{j} - 2 \lambda^{i}) \lambda^{j} + 4 (u^{0})^{2}}{u^{0} (\lambda^{i} - \lambda^{j})^{2}} \partial_{i} u^{0} \partial_{j} u^{1} - \frac{(\lambda^{i} - 2 \lambda^{j}) \lambda^{i} + 4 (u^{0})^{2}}{u^{0} (\lambda^{i} - \lambda^{j})^{2}} \partial_{j} u^{0} \partial_{i} u^{1}.
\end{aligned}
\end{gather}  
A direct calculation shows that the system of equations \eqref{GT} is in involution, i.e. compatibility conditions of the form
\[
\partial_{k} \partial_{j} \lambda^{i} = \partial_{j} \partial_{k} \lambda^{i} \qquad  \partial_{k} \partial_{i} \partial_{j} u^{0} =\partial_{i} \partial_{k} \partial_{j} u^{0} \qquad  \partial_{k} \partial_{i} \partial_{j} u^{1} =\partial_{i} \partial_{k} \partial_{j} u^{1}
\]
are satisfied modulo the equations~\eqref{GT} for all permutation of the derivatives with respect to $R^{i}$, $R^{j}$, $R^{k}$. A first classification of Gibbons-Tsarev systems has been provided by Odesskii and Sokolov \cite{ChainsOdesskiSokolov, HydroOdesskiSokolov}. We note that, at the best of our knowledge, the system \eqref{GT} has not appeared before in the literature and it is not included in the class considered in \cite{ChainsOdesskiSokolov, HydroOdesskiSokolov}.

The compatibility of the Gibbons-Tsarev system \eqref{GT} guarantees that for any solution of the Riemann invariants system \eqref{hydrochain} it is possible to construct  a solution of the hydrodynamic chain. This property was proposed in \cite{chain} as definition of integrability of a hydrodynamic chain:
\begin{definition}[Integrable hydrodynamic chains~\cite{chain}]
A hydrodynamic chain of class $C$ is integrable if it admits $N-$phase solutions of the form \eqref{Nphases} for arbitrary $N$.
\end{definition}

Therefore, the above calculations prove the following 
\begin{theorem}
\label{th:hrydrored}
The hydrodynamic chain \eqref{hydrochain} is integrable in the sense of the hydrodynamic reductions.
\end{theorem}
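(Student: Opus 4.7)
The plan is to follow the hydrodynamic reductions recipe spelled out just before the theorem statement and push it to the point where one recognises an involutive Gibbons--Tsarev system, whose solvability then guarantees $N$-phase solutions for arbitrary $N$. Concretely, I would substitute the ansatz $u^{k} = u^{k}(R^{1},\dots,R^{N})$ into the chain \eqref{hydromatrixchain} and use the diagonal system $R^{i}_{t} = \lambda^{i} R^{i}_{x}$ to obtain, for each $i$, the eigenvalue relation \eqref{eigenA}. Because $A(\mathbf{u})$ has the sparse band structure displayed in \eqref{eq:a}, each scalar component of \eqref{eigenA} expresses $\partial_{i} u^{k}$ for $k \notin \{0,1\}$ as a polynomial (in $\lambda^{i}$) combination of $\partial_{i} u^{0}$ and $\partial_{i} u^{1}$ with coefficients depending on the $u^{j}$'s. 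The cases $k = -2,-1,2,3$ are already exhibited in \eqref{hchain}; I would verify that the recursive structure of the band continues to determine $\partial_{i} u^{k}$ for all $|k|\geq 2$ in terms of $\partial_{i} u^{0}$ and $\partial_{i} u^{1}$ (a straightforward induction on $|k|$ using the four-diagonal form of $A(\mathbf{u})$).

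Next I would impose the compatibility conditions $\partial_{j}\partial_{i} u^{k} = \partial_{i}\partial_{j} u^{k}$ for the four equations in \eqref{hchain}, which is the minimal set needed to see all structural consequences (the $k=\pm 1$ equations couple $u^{0},u^{1}$ to $\lambda^{i}$ through the band, while $k=\pm 2$ brings in $u^{\pm 2}$ on both sides). Working these out and simplifying should produce exactly the Gibbons--Tsarev system \eqref{GT}, namely first-order equations for $\partial_{j}\lambda^{i}$ in terms of $\partial_{j}u^{0}$, together with second-order equations for $\partial_{i}\partial_{j} u^{0}$ and $\partial_{i}\partial_{j} u^{1}$. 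I would then have to check that imposing compatibility on any other $k$ yields no new relation, i.e.\ is a differential consequence of \eqref{GT}; given the band structure and the recursion from step one, this reduces to a finite algebraic identity on the parametrising expressions for $\partial_{i} u^{k}$, which should hold automatically once \eqref{GT} is in force.

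The decisive step is the verification that \eqref{GT} is in involution: the mixed partials
\[
\partial_{k}\partial_{j}\lambda^{i} = \partial_{j}\partial_{k}\lambda^{i},\qquad \partial_{k}\partial_{i}\partial_{j} u^{0} = \partial_{i}\partial_{k}\partial_{j} u^{0},\qquad \partial_{k}\partial_{i}\partial_{j} u^{1} = \partial_{i}\partial_{k}\partial_{j} u^{1},
\]
must reduce to identities modulo \eqref{GT} for all orderings of $i,j,k$. This is the main obstacle: the computation is tedious, as cross-differentiating the first two lines of \eqref{GT} generates terms involving $\partial_{k}u^{0}$, $\lambda^{k}$ and the symmetric partners, which have to be repeatedly simplified using \eqref{GT} itself. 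I would organise the calculation by first eliminating $\partial_{k}\lambda^{i}$ and $\partial_{k}\lambda^{j}$ in favour of $\partial_{k} u^{0}$, then collecting terms by antisymmetry in the pair $(i,j)$ so that the identity reduces to a rational expression in $\lambda^{i},\lambda^{j},\lambda^{k},u^{0}$ that vanishes. A symbolic algebra check on a handful of representative index triples is effectively sufficient, since the parametric structure is uniform in $i,j,k$.

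Once involutivity is established, I invoke the general principle of \cite{chain}: the solutions of an involutive Gibbons--Tsarev system are parametrised by $N$ arbitrary functions of one variable (one per Riemann invariant), so for every $N$ there exist genuine $N$-phase solutions $u^{k}(R^{1},\dots,R^{N})$ of the chain, with the $R^{i}$ satisfying \eqref{Rinvariant} and the semi-Hamiltonian property \eqref{semiH} following as a consequence of \eqref{GT}. By the definition recalled just above the theorem, this is exactly integrability in the sense of hydrodynamic reductions, completing the proof.
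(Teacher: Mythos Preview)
Your proposal is correct and follows essentially the same approach as the paper: substitute the $N$-phase ansatz, use the band structure of $A(\mathbf{u})$ to parametrise $\partial_i u^k$ by $\partial_i u^0$ and $\partial_i u^1$, impose compatibility on the equations \eqref{hchain} to obtain the Gibbons--Tsarev system \eqref{GT}, and then check by direct computation that \eqref{GT} is in involution. You spell out in somewhat more detail the inductive parametrisation for general $k$ and the verification that higher compatibility conditions introduce nothing new, which the paper leaves implicit, but the logical skeleton is identical.
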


\section{Concluding remarks}
\label{sec:conclusions}
Extensive studies of Random Matrix Ensembles and their connection with the theory of integrable systems (see e.g. \cite{vanMoerbekenotes} and references therein) show that the order parameters, defined as derivatives of the partition function, and their suitable combinations appear as entries of the Lax matrix and the associated Lax equation. For example, in the case of HME, one has the Lax equations for the Toda Lattice. The matrix ensemble of interest is specified by a particular solution of the hierarchy, obtained from a suitable initial condition. Such initial condition is evaluated by considering the partition function and its derivatives in the case where all coupling constants $t_k$ vanish.
Similarly, in the case of SMEs the underlying integrable system is constituted by the Pfaff Lattice and the equations of its hierarchy. These equations specify the behaviour of the order parameters, namely the entries of the Lax matrix, as functions of the coupling constants. 

For even power interactions, the thermodynamic limit of the HME is constituted by an order parameter that evolves according to a scalar integrable hierarchy (the Hopf hierarchy) \cite{HermitianPRE}. On the other hand, in the $n\to \infty$ limit, the even SME  is specified  by infinitely many order parameters that satisfy an integrable hydrodynamic chain. 
This result follows from the key observation that the components of the reduced even Pfaff Lattice can be rearranged in the form of a chain of equations, where the state of each site is given by a vector of infinitely many components.
From this point of view the SME reveals a higher level of complexity compared to the HME due to the existence of integrable reductions in any number of components and associated critical scenarios. 

It is indeed well known that, for generic initial conditions, solutions of systems of hydrodynamic type break down in finite time, namely, in the context of SME, for finite values of the coupling constants $t_{k}$. In the case of the HME, the critical behaviour of the order parameter at the leading order is described by the Whitney cusp and, as observed in \cite{HermitianPRE}, finite size corrections resolve the singularity via the onset of a modulated highly oscillating quasi-periodic wave, known as  dispersive shock. The dispersive shock characterises a new type of phase transition where asymptotic stable states are connected by an intermediate state where order parameters develop fast oscillations induced by the dispersive nature of finite size corrections. The case of SME presents a similar scenario with a potentially richer variety of behaviours due to the higher number of components. Further studies in this direction will entail the detailed analysis of the solution with the specific initial condition induced by the partition function \eqref{eq:partition_n_even_intro} calculated at $\mathbf{t} =\mathbf{0}$. 

The application of the Haantjes tensor test and the method of hydrodynamic reductions allow to prove the integrability of the hydrodynamic chain by considering the first nontrivial flow only. Integrability implies the existence of infinitely many commuting flows which describe the evolution of the order parameters in the space of coupling constants.

We finally note that above considerations are concerned with a direct comparison between HME and SME when restricted to even power interactions. It is important to note that, with the given scaling, the hydrodynamic chain arises in the case of even power interactions only. As  mentioned earlier and further specified in Appendix  \ref{app:dispfaff}, the first flow \eqref{t1flows} associated to $t_{1}$ does not lead to an infinite chain of quasilinear PDEs.
The system \eqref{t1flows} and its relation with the large $n$ scaling properties of the initial condition will be analysed in detail in a separate work.
The previously unseen connection between matrix ensembles and hierarchies of hydrodynamic chains discussed in this paper, together with the aforementioned results for the HME discussed in \cite{HermitianPRE} 
suggests that the study of random matrix models may lead to the discovery of new interesting integrable hydrodynamic PDEs. The study of the PDEs so obtained arises as a general framework and a new methodology to classify and describe asymptotic properties of complex systems.

\newpage
\bibliography{biblio}
\bibliographystyle{spmpsc}
\newpage
\appendix
\section{$t_2$-flow of the Pfaff Lattice}
\label{app:t2flow_all}
We provide the expressions for the second flow of the Pfaff Lattice with both even and odd times. By direct inspection the evolution equations for $v^k_n$ and $w^k_n$ with respect to $t_2$ read respectively
\begin{gather} 
\label{pfaff_lattice_fields_v}
\begin{aligned} 
\partial_{t_{2}} v^{k}_n & = -\frac{1}{2}\,v^{k}_n \left( (v^0_{n-k})^2-(v^0_{n-k-1})^2- (v^0_n)^2+(v^0_{n-1})^2   + w^0_{n-k}\,w^1_{n-k}  -w^0_{n-k-1}\,w^1_{n-k-1}\right. \\[1.5ex]
& \left.\hspace{2ex} - w^0_n\,w^1_n+w^0_{n-1}\,w^1_{n-1} \right)    + w^0_{n-k}\,v^{k-1}_n - w^0_{n-k-1}\,v^{k+1}_n  + w^0_n\,v^{k+1}_{n+1} \\[1.5ex]
& \hspace{2ex} - w^0_{n-1}\,v^{k-1}_{n-1}  + \left( v^0_{n-k}-v^0_{n-k-1} \right)\,w^{k-1}_{n} - \left( v^0_n-v^0_{n-1} \right)\,w^{-1}_n\,w^{-k}_{n} \\[1.5ex]
& \hspace{2ex} - \left( w^0_n\,v^{k+1}_n - w^0_{n-1}\,v^{-(k+1)}_{n-1} \right)\,w^{-k}_{n}, \, \hspace{10ex} k < 0\\[1.5ex] 
\partial_{t_{2}} v^0_n &= w^0_n\left( v^1_n + v^{-1}_n\right) \\[1.5ex]
\partial_{t_{2}} v^{k}_n & = \frac{1}{2}v^{k}_n \left(( v^0_{n+k})^2-(v^0_{n+k-1})^2+ (v^0_n)^2-(v^0_{n-1})^2   + w^0_{n+k}\,w^1_{n+k}  -w^0_{n+k-1}\,w^1_{n+k-1}\right. \\[1.5ex]
& \hspace{2ex} \left.+ w^0_n\,w^1_n-w^0_{n-1}\,w^1_{n-1} \right)   + w^0_{n+k}\,v^{k+1}_n - w^0_{n+k-1}\,v^{k-1}_n  + w^0_n\,v^{k}_{n+1} - w^0_{n-1}\,v^{k+1}_{n-1} \\[1.5ex]
& \hspace{2ex} + \left( v^0_{n+k}-v^0_{n+k-1} \right)\,w^{-1}_{n+k}\,w^{k}_{n} - \left( v^0_n-v^0_{n-1} \right)\,w^{-(k+1)}_{n} \\[1.5ex]
& \hspace{2ex} + \left( w^0_{n+k}\,v^{-(k-1)}_{n+k} - w^0_{n+k-1}\,v^{k-1}_{n+k-1}  \right)\,w^{k}_{n}\,, \hspace{10ex} k > 0 \\[1.5ex]
\end{aligned}
\end{gather}
\begin{gather}
\label{pfaff_lattice_fields_w}
\begin{aligned} 
 \partial_{t_{2}} w^{k}_{n}  =& \frac{1}{2}w^{k}_{n} \left((v^0_{n-k-1})^2-(v^0_{n-k-2})^2+ (v^0_n)^2-(v^0_{n-1})^2\right. \\[1.5ex]
& \left. \hspace{2ex} + w^0_{n-k -1}\,w^1_{n-k-1}  -w^0_{n-k-2}\,w^1_{n-k-2}  + w^0_n\,w^1_n-w^0_{n-1}\,w^1_{n-1} \right) \\[1.5ex]
& \hspace{2ex}  + w^0_{n-k-1}\,w^{k-1}_{n} - w^0_{n-k-2}\,w^{k+1}_{n}  + w^0_n\,w^{k+1}_{n+1} \\[1.5ex]
& \hspace{2ex} - w^0_{n-1}\,w^{k-1}_{n-1}    + \left( v^0_{n-k-1}-v^0_{n-k-2} \right)\,w^{-1}_{n-k-1}\,v^{k+1}_n - \left( v^0_n-v^0_{n-1} \right)\,w^{-1}_n\,v^{-(k+1)}_n \\[1.5ex]
& \hspace{2ex}  + \left( w^0_{n-k-1}\,v^{k+2}_{n-k-1} + w^0_{n-k-2}\,v^{-(k+2)}_{n-k-2} \right)\,v^{k+1}_n \\[1.5ex]
& \hspace{2ex} - \left( w^0_n\,v^{k+2}_n + w^0_{n-1}\,v^{-(k+2)}_{n-1} \right)\,v^{-(k+1)}_n\,, \hspace{5ex}  k < 0 \\[1.5ex]
\partial_{t_{2}} w^0_n =& \frac{1}{2} w^0_n\,\left( (v^0_{n+1})^2-(v^0_{n-1})^2+w^0_{n+1}\,w^1_{n+1}-w^0_{n-1}\,w^1_{n-1} \right)  + w^0_n\left( w^{-1}_{n+1}-w^{-1}_{n-1} \right)  \\[1.5ex]
\partial_{t_{2}} w^{1}_n  =& -\frac{1}{2}w^{1}_n \left(( v^0_{n+1})^2-(v^0_{n-1})^2   + w^0_{n+1}\,w^1_{n+1}-w^0_{n-1}\,w^1_{n-1} \right) + w^0_{n+1}\,w^{2}_n   - (w^0_{n})^2  \\[1.5ex]
& \hspace{2ex} + w^0_n\,w^{0}_{n+1}- w^0_{n-1}\,w^{2}_{n-1}   + \left( v^0_{n+1}-v^0_{n} \right)\,v^{1}_n  - \left( v^0_n-v^0_{n-1} \right)\,v^{-1}_n \\[1.5ex]
\partial_{t_{2}} w^{k}_n =& -\frac{1}{2}w^{k}_n \left(( v^0_{n+k})^2-(v^0_{n+k-1})^2+ (v^0_n)^2-(v^0_{n-1})^2   + w^0_{n+k}\,w^1_{n+k}\right.\\[1.5ex]
& \hspace{6ex}\left. -w^0_{n+k-1}\,w^1_{n+k-1}    +  w^0_n\,w^1_n-w^0_{n-1}\,w^1_{n-1} \right) + w^0_{n+k}\,w^{k+1}_n   - w^0_{n+k-1}\,w^{k-1}_{n}  \\[1.5ex]
& \hspace{2ex} + w^0_n\,w^{k-1}_{n+1}- w^0_{n-1}\,w^{k+1}_{n-1}   + \left( v^0_{n+k}-v^0_{n+k-1} \right)\,v^{k}_n  - \left( v^0_n-v^0_{n-1} \right)\,v^{-k}_n\,, \hspace{5ex}  k > 1. \\[1.5ex]
\end{aligned} 
\end{gather}

\newpage
\section{Continuum limit of the Pfaff Lattice: \texorpdfstring{$t_{1}-$}{t1}flow and higher order corrections}
\label{app:dispfaff}
We provide the continuum limit of equations \eqref{eq:t1v}, \eqref{eq:t1w} constituting the $t_1$-flow of the Pfaff Lattice. Using the same approach described in Section \ref{sec:limit} for the reduced even Pfaff Lattice, we introduce interpolating functions $w^{k}(x/\varepsilon)$ and $v^{k}(x/\varepsilon)$, with finite $x = \varepsilon n$, in the limit $n\rightarrow \infty$ and $\varepsilon \to 0$. Hence, given $w^{k}(n) = w^{k}_{n}$, $v^k(n) = v^k_n$ we define
\begin{equation*}
\begin{split} 
& u^{k}(x) := w^{k}\left (\frac{x}{\varepsilon} \right) \\[1.5ex]
& z^k(x) := v^k\left(\frac{x}{\varepsilon}\right).
\end{split}
\end{equation*}
Therefore, at $x = \varepsilon n$, we have $u^{k}(x \pm \varepsilon ) = w^{k}_{n \pm 1}$ and $z^{k}(x \pm \varepsilon ) = v^{k}_{n \pm 1}$. Using the above substitution into the equations \eqref{eq:t1v} and \eqref{eq:t1w} and expanding in Taylor series  for $\varepsilon \to 0$, the evolution equations for $v^k_n$ \eqref{eq:t1v} give, up to  $O(\varepsilon^3)$ 
\begin{gather}
\label{t1flows}
\begin{aligned}
    z^{k}_{t_1} =&  - u^0 \left(u^{-(k+1)}+u^{-(k-1)}\right)-u^{-1}u^{-k} +u^{k-1}\\[1.5ex]
    &\hspace{2ex}+ \varepsilon\left( u^{-(k-1)}u^0_x  + k\,z^{k}z^0_x- u^0 u^{-(k+1)}_x +  u^0 u^{-(k-1)}_x\right)\\[1.5ex] &\hspace{2ex}-\frac{\varepsilon^2}{2}\left((u^{-(k-1)}u^0)_{xx} + u^0 u^{-(k+1)}_{xx}+\,(k+1)k\,z^{k} z^0_{xx}\right)\\[1.5ex]
    &\hspace{2ex}+ O(\varepsilon^3),\hspace{10ex}k<-1\\[1.5ex]
     z^{-1}_{t_1} =& \,u^{-2} - u^0 - u^{-1} u^1 -u^0 u^2 + \varepsilon\left(u^0_x u^2 + u^0u^2_x -z^{-1} z^0_x\right) -\frac{\varepsilon^2}{2} \left(u^0 u^2\right)_{xx} + O(\varepsilon^3)\\[1.5ex]
    z^0_{t_1} =& \,u^0 u^1\\[1.5ex]
    z^1_{t_1}=& - u^{-2} + u^0 + u^{-1} u^1 + u^0 u^2 + \varepsilon\left(z^0_x z^1 + u^{-1}_x u^1 + u^0_x u^2\right)\\[1.5 ex]
    &+ \frac{\varepsilon^2}{2}\left(u^1 u^{-1}_{xx} + u^{2}u^0_{xx}\right) + O(\varepsilon^3) \\[1.5ex]
    z^k_{t_1} =& \, u^0\left(u^{k-1} + u^{k+1}\right) + u^{-1} u^k - u^{-(k+1)} \\[1.5ex]
    &+ \varepsilon\left(k\, z^0_x z^k + (k-1) u^{k-1} u^0_x + k u^{-1}_x u^k + k u^0_x u^{k+1}\right)\\[1.5ex]
    &+\frac{\varepsilon^2}{2}\left(k^2 (u^k u^{-1}_{xx}+u^{k+1} u^0_{xx}) + (k-1)^2 u^{k-1} u^0_{xx} + k(k-1) z^k z^0_{xx} \right)\\[1.5ex] 
    &+ O(\varepsilon^3),\hspace{10ex} k>1.\\[1.5ex]
    \end{aligned}
    \end{gather}
Observe that at the leading order $O(\varepsilon^0)$, $z^k_{t_1} = - z^{-k}_{t_1}$ for any value of $k \neq 0$. Moreover, as the lattice equation for $v^0_n$ from \eqref{eq:t1v} depends only on the site $n$, the corresponding equation for $z^{0}$ does not carry higher order correction in $\varepsilon$.  From the evolution equations \eqref{eq:t1w} for $w^k_n$  we find 
\begin{gather}
\label{t1flowsbis}
\begin{aligned}
  u^{k}_{t_1} =& \,2 z^0 u^{k} + u^0 \left(z^{k+2}-z^{-(k+2)}+z^{k} - z^{-k}\right) + u^{-1} \left(z^{k+1} - z^{-(k+1)}\right) \\[1.5ex]
    &+ \varepsilon\left(-(k+1)(z^{k+1} u^{-1}_x + z^{k} u^0_x)-(k+2)( z^{k+2}u^0_x + u^{k}z^0_x)\right. \\[1.5ex]
    &\hspace{6ex}\left. - u^0 z^{-(k+2)}_x + (u^0 z^{-k})_x\right)+\frac{\varepsilon^2}{2} \left((k+1)^2 (z^{k+1}_x u^{-1}_{xx}+z^{k} u^0_{xx}) \right.\\[1.5ex]
    &\left.\hspace{6ex} +(k+2)^2 z^{(k+2)}u^0_{xx}+ (3 + k(k+3))u^{k}z^0_{xx}- u^0 z^{-(k+2)}_{xx} -(u^0 z^{-k})_{xx}\right)\\[1.5ex]
    &+ O(\varepsilon^3),\hspace{10ex}k<-1\\[1.5ex]
    u^{-1}_{t_1} =& \,u^0 (z^{-1} - z^1) + \varepsilon\left(u^0 z^1\right)_x -\frac{\varepsilon^2}{2}\left(u^0z^1\right)_{xx} + O(\varepsilon^3)\\[1.5ex]
    u^0_{t_1} =& \,\frac{\varepsilon^2}{2} z^0_{xx} u^0 + O(\varepsilon^4)\\[1.5ex]
    u^k_{t_1} =& -2z^0 u^k + z^k - z^{-k} - \varepsilon(k-1)z^0_x u^k - \frac{\varepsilon^2}{2}(1 + k(k-1))u^k z^0_{xx} + O(\varepsilon^3),\;\;\;\;k\geq1.
\end{aligned}
\end{gather}
Notice that equations \eqref{t1flows} and \eqref{t1flowsbis}, unlike their counterpart in the reduced even Pfaff Lattice \eqref{hydrochain}, are not quasilinear and do not constitute a hydrodynamic chain.

\section{Continuum limit of the even Pfaff Hierarchy: \texorpdfstring{$t_{2}-$}{t2}flow and higher order corrections}\label{app:even_corrections}
We provide, for the reduced even Pfaff Hierarchy, the corrections to the leading order of equation \eqref{hydrochain}  up to  $O(\varepsilon^3)$:
\begin{gather}
    \begin{aligned}
        u^k_t & =  \left(\left((k+2)  u^{k+1}-k  u^{k-1}-u^{1}u_x^{0} u^{k}\right)u_x^{0}-u^{0} u_x^{1} u^{k}+u^{0} u_x^{k-1}+u^{0} u_x^{k+1}\right)  \\[1.5ex] 
        & \hspace{2ex} +\frac{1}{2}  \left(k^2 u_{xx}^{0} (-u^{k-1})+(k^2+2k) u_{xx}^{0} u^{k+1}-k u^{k} \left(2 u_x^{0} u_x^{1}+u^{1} u_{xx}^{0}+u^{0} u_{xx}^{1}\right) \right. \\[1.5ex] 
        & \left. \hspace{6ex} -2 u_x^{0} u_x^{k+1}+u^{0}\left( u_{xx}^{k-1}- u_{xx}^{k+1}\right)\right)  \varepsilon\\[1.5ex]
        &\hspace{2ex}+\frac{1}{12}  \left(2 \left( u_{xxx}^{0} \left( ((k+1)^3+1) u^{k+1}-k^3 u^{k-1}\right)+3 u_{xx}^{0} u_x^{k+1}+3 u_x^{0} u_{xx}^{k+1} \right. \right.\\[1.5ex]
        & \left. \hspace{6ex}+u^{0} \left( u_{xxx}^{k-1}+ u_{xxx}^{k+1}\right) \right) \\[1.5ex]
        & \left. \hspace{6ex}- u^{k}\left(3 k^2+3 k+2\right) \left(3 u_x^{1} u_{xx}^{0}+3 u_x^{0} u_{xx}^{1}+u^{1} u_{xxx}^{0}+u^{0} u_{xxx}^{1}\right)\right) \varepsilon ^2\\[1.5ex]
        &\hspace{2ex}+O\left(\epsilon ^3\right) \,, \hspace{10ex} k<0 \\[1.5ex]
        u^0_t & = u^{0} \left(u^{-1}_x+u^{1} u^{0}_x+u^{0} u^{1}_x\right)+\frac{1}{2} \left( u^{0} u^{-1}_{xx} \right) \, \varepsilon \\[1.5ex] & \hspace{2ex} +\frac{1}{6}  u^{0} \left(3 u^{1}_x u^{0}_{xx}+3 u^{0}_x u^{1}_{xx}+u^{-1}_{xxx}+u^{1} u^{0}_{xxx}+u^{0} u^{1}_{xxx}\right) \varepsilon^2 +O\left(\varepsilon ^3\right) \\[1.5ex] 
        u^1_t &=  \left(2 u^2 u_x^0-u^1 \left(u^1 u_x^0+u^0 u_x^1\right)+u^0 u_x^2\right)  +\left(-u_x^0 u_x^2-\frac{1}{2} u^0 u_{xx}^2\right)\,\varepsilon \\[1.5ex]
        & \hspace{2ex} +\frac{1}{6}  \left(-u_{xxx}^0 (u^1)^2-\big(3 u_x^1 u_{xx}^0+3 u_x^0 u_{xx}^1+u^0 u_{xxx}^1\big) u^1 \right.\\[1.5ex] 
        & \left. \hspace{6ex}+3 u_x^2 u_{xx}^0+3 u_x^0 u_{xx}^2+2 u^2 u_{xxx}^0+u^0 u_{xxx}^2\right) \,\varepsilon^2+O\big(\varepsilon^3\big) \\[1.5ex]
        u^{k}_t&= \left(\left((k+1)  u^{k+1}-(k-1)  u^{k-1}+u^1 u^k \right) u_{x}^0 +u^0 u_{x}^1 u^k+u^0 u_{x}^{k-1}+u^0 u_{x}^{k+1}\right)  \\[1.5ex] & \hspace{2ex} +\frac{1}{2} \left(u_{xx}^0\big((k^2-1) u^{k+1}-(k^2-2k+1)u^{k-1}\big)- 2 u_{x}^0 u_{x}^{k+1} \right. \\[1.5ex] 
        & \left. \hspace{6ex} +(k-1) \left(2 u_{x}^0 u_{x}^1+u^1 u_{xx}^0+u^0 u_{xx}^1\right) u^k+u^0 u_{xx}^{k-1}-u^0 u_{xx}^{k+1}\right) \varepsilon   \\[1.5ex] 
        & \hspace{2ex} +\frac{1}{12} \left(2 \left( u_{xxx}^0 \left((k^3+1)u^{k+1}-(k-1)^3u^{k-1}\right)+3 u_{xx}^0 u_{x}^{k+1} \right. \right. \\[1.5ex] 
        & \left. \hspace{6ex}+3 u_{x}^0 u_{xx}^{k+1}  +u^0 \left( u_{xxx}^{k-1}+u_{xxx}^{k+1} \right) \right)+\big(3 k^2-3 k+2\big) \left(3 u_{x}^1 u_{xx}^0 \right. \\[1.5ex]
        &\left. \left. \hspace{6ex}+3 u_{x}^0 u_{xx}^1+u^1 u_{xxx}^0+u^0 u_{xxx}^1\right) u^k\right)  \varepsilon ^2+O\left(\varepsilon ^3\right) \,, \hspace{10ex} k>1
    \end{aligned}
\end{gather}

\newpage
\section{The Nijenhuis tensor}\label{sec:N}
We list the explicit form of the nonzero elements of the Nijenhuis tensor $N^i_{jk}$ for any value of $i$. They are evaluated directly based on the definition of the tensor \eqref{eq:Ndef} and the form of the matrix $A(\mathbf{u})$ given by the expressions \eqref{eq:a}.  As mentioned in the proof of Proposition \ref{prop:diagonalisability}, the only nonzero elements of $N^i_{jk}$ are the ones listed in  \eqref{eq:N_nonzero} along with their counterparts with lower indices exchanged -- recall that $N^i_{jk} = - N^i_{kj}$.  \\

For $|i|>2$
\begin{equation}
\begin{split}
    N^i_{0\,1}&= \left\{\begin{array}{ll}u^0\left((i-1)u^{i-1} - (i+1)u^{i+1}\right)& \mbox{ if }i>2\\[1.2ex]
    u^0\left(i u^{i-1} - (i+2) u^{i+1}\right)&\mbox{ if }i < -2\end{array}\right.\\[1.2ex]
        N^i_{0\,-1} &= \left\{\begin{array}{ll}(i-1)u^{i-1} + u^1 u^i - (i+1)u^{i+1}& \mbox{ if }i> 2\\[1.2ex]
   i u^{i-1} - u^i u^1-(i+2) u^{i+1}&\mbox{ if }i< -2\end{array}\right.
    \end{split}\nonumber
\end{equation}

\begin{minipage}[h]{0.45 \linewidth}
\centering
\begin{equation}
    \nonumber
    \begin{split}
            &N^i_{-1,1} = -\mbox{sgn}(i) u^0 u^i \hspace{6ex} \\
            &N^i_{0,i} = - 4 u^0\\
            &N^i_{0,i+1} = u^0 u^1\\
            &N^i_{0,i-1} = u^0 u^1
    \end{split}
\end{equation}
\vspace{0.2cm} 
\end{minipage}
\begin{minipage}[h]{0.45 \linewidth}
\centering
\begin{equation}
    \nonumber
    \begin{split}
             &N^i_{1,i+1} = (u^0)^2 \hspace{23ex} \\
             &N^i_{1,i-1} = (u^0)^2\\
            &N^i_{-1,i+1} = u^0\\
           &N^i_{-1,i-1} = u^0
    \end{split}
\end{equation}
\vspace{0.2cm} 
\end{minipage}

For $|i|\leq 2$

\begin{minipage}[h]{0.45\linewidth}
\begin{equation}
\begin{split}
&N^2_{0,1}=u^0 (2 u^1 - 3 u^3)\\
&N^2_{0,-1}=u^1 (1 + u^2) - 3 u^3\\
&N^2_{-1,1}=-u^0 (-1 + u^2)\\
&N^2_{0,2}= - 4 u^0\\
&N^2_{0,3} = u^0 u^1\\
&N^2_{1,3}=(u^0)^2\\
&N^2_{-1,3}=u^0\\
&N^1_{0,1}=-2 u^0 (2 + u^2)\\
&N^1_{0,2}= u^0 u^1\\
&N^1_{1,2}=(u^0)^2\\
&N^1_{-1,0}= - (u^1)^2 + 2 u^2 \\
&N^1_{-1,1}=-u^0 u^1\\
&N^1_{-1,2}=u^0\nonumber
\end{split}
\end{equation}
\end{minipage}
\begin{minipage}[h]{0.45 \linewidth}
\begin{equation}
\begin{split}
&N^{-2}_{0,1}=-2 u^{-3} u^0\\
&N^{-2}_{0,-1}=-2 u^{-3} + (-u^{-2} + u^{0}) u^{1}\\
&N^{-2}_{-1,1}=(u^{-2} - u^{0}) u^{0}\\
&N^{-2}_{0,-2}= - 4 u^0\\
&N^{-2}_{0,-3} = u^0 u^1\\
&N^{-2}_{1,-3}=(u^0)^2\\
&N^{-2}_{-1,-3}=u^0\\
&N^{-1}_{0,1}=-u^0 (u^{-2} + 2 u^0)\\
&N^{-1}_{0,-1}= -u^{-2} - 6 u^{0} - u^{-1} u^1\\
&N^{-1}_{0,-2}=u^0 u^1\\
&N^{-1}_{1,-2}= (u^0)^2\\
&N^{-1}_{-1,-2} = u^0 \\
&N^{-1}_{-1,1}=u^0 u^{-1}\nonumber
\end{split}    
\end{equation}
\end{minipage}

\end{document}